

\documentclass[sigconf]{acmart}


\settopmatter{printacmref=true}

\fancyhead{}

\usepackage{balance}

\def\BibTeX{{\rm B\kern-.05em{\sc i\kern-.025em b}\kern-.08emT\kern-.1667em\lower.7ex\hbox{E}\kern-.125emX}}
    
%

\copyrightyear{2019} 
\acmYear{2019} 
\setcopyright{acmlicensed}
\acmConference[SPAA '19]{31st ACM Symposium on Parallelism in Algorithms and Architectures}{June 22--24, 2019}{Phoenix, AZ, USA}
\acmBooktitle{31st ACM Symposium on Parallelism in Algorithms and Architectures (SPAA '19), June 22--24, 2019, Phoenix, AZ, USA}

\usepackage{booktabs,epstopdf} 

\usepackage{multirow,rotating}
\usepackage{varwidth}
\usepackage{caption}
\usepackage[hang,flushmargin]{footmisc}

%

\usepackage{xcolor}
\usepackage{enumitem}
\usepackage{footmisc}
\usepackage{array}
\usepackage{amsmath,amsthm}

\newcolumntype{M}[1]{>{\centering\arraybackslash}m{#1}}

\def \jthRound {j}
\newcommand{\gj}{\gamma_{\jthRound}}
\newcommand{\lfrac}[2]{{#1}/{#2}}

\newcommand{\srank}[1]{\chi({#1})}
\newcommand{\skey}[1]{S({#1})}

\newif\ifapproxhistogram

%
\newif\ifchangasorting
\changasortingtrue

%
\newif\ifshowproofs
\showproofstrue

\newcommand{\compactcaption}[1]{\vspace{-1.3em}\caption{#1}\vspace{-1.35em}}




\sloppy

\begin{document}
\title[Histogram Sort with Sampling]{Histogram Sort with Sampling}
\author{Vipul Harsh}
\affiliation{%
  \institution{University of Illinois at Urbana Champaign}
}
\email{vharsh2@illinois.edu}

\author{Laxmikant Kale} 
 \affiliation{%
  \institution{University of Illinois at Urbana Champaign}
 }
\email{kale@illinois.edu}

\author{Edgar Solomonik} 
 \affiliation{%
  \institution{University of Illinois at Urbana Champaign}
 }
\email{solomon2@illinois.edu}

\begin{abstract}
To minimize data movement, state-of-the-art parallel sorting algorithms use techniques based on sampling and histogramming to partition keys prior to redistribution.
Sampling enables partitioning to be done using a representative subset of the keys, while histogramming enables evaluation and iterative improvement of a given partition.
We introduce Histogram sort with sampling (HSS), which combines sampling and iterative histogramming to find high-quality partitions with minimal data movement and high practical performance.
Compared to the best known (recently introduced) algorithm for finding these partitions, 
our algorithm requires a factor of $\Theta(\log(p)/\log\log(p))$ less communication, and substantially less when compared to standard variants of Sample sort and Histogram sort.
\ifapproxhistogram
Our analysis strengthens theoretical results for the accuracy 
achieved histogramming and sampling techniques.
For example, we also show that an approximate but fairly accurate histogram can be obtained using a $\mathcal{O}(\sqrt{p\log N}/\epsilon)$ sample on every processor. This can be used to speed up the histogramming step and can be of independent interest for anSW=Trueering general queries in large parallel processing systems. 
\fi
We provide a distributed-memory implementation of the proposed algorithm, compare its performance to two existing implementations, and provide a brief application study showing benefit of the new algorithm.
\end{abstract}

%
\begin{CCSXML}
<ccs2012>
<concept>
<concept_id>10003752.10003809.10010170.10010174</concept_id>
<concept_desc>Theory of computation~Massively parallel algorithms</concept_desc>
<concept_significance>500</concept_significance>
</concept>
<concept>
<concept_id>10010147.10010169.10010170.10010174</concept_id>
<concept_desc>Computing methodologies~Massively parallel algorithms</concept_desc>
<concept_significance>500</concept_significance>
</concept>
</ccs2012>
\end{CCSXML}

\ccsdesc[500]{Computing methodologies~Massively parallel algorithms}

\keywords{parallel sorting; data partitioning; histogramming; sampling;}  

\maketitle


\section{Introduction}

Finding a global partition of the data is the key challenge that separates parallel sorting from sequential sorting. Partition-based sorting algorithms, that partition the data prior to redistributing it (in contrast to merge-based sorting algorithms), are advantageous on modern architectures due to their low communication cost. Sampling data either uniformly or selectively and histogramming the split produced by the partition are the two most common techniques for determining a good partition. By quantifying the parallel execution cost in terms of computation and communication, we demonstrate that a simple but careful combination of these two techniques leads to an algorithm that provides both theoretical and practical improvements over the best previously known algorithm.


A parallel sorting algorithm needs to redistribute $N$ keys across $p$ processors such that they are in a globally sorted order. In such an order, keys on processor $k$ are no greater than keys on processor $k+1$ and keys are sorted within each processor. An exact splitting (we use the terms partitioning and splitting interchangeably) is achieved if all processors own the same number of keys, while an approximate splitting guarantees that every processor owns no more than $N(1+\epsilon)/p$ keys for some $\epsilon$; we call this an $\epsilon-$balanced partition. Given sorted keys with an approximate splitting for $\epsilon=\mathcal{O}(1)$, an exact splitting can be achieved at no cost in asymptotic running time. However, it increases the running time in practice and is often not required by applications. Algorithms that guarantee a balanced partition for a given $\epsilon$ are favorable since a large $\epsilon$ increases the memory footprint and can hurt application performance.


The most natural way to cheaply determine a global partition is to collect a sample of keys, and infer a global partition from the ideal partition of the sorted sample. Sample sort~\cite{frazer1970samplesort} and its variants are basic realizations of this approach, which are widely used in practice~\cite{o2008terabyte}, and also serve as building blocks for our algorithm. 
Selecting a random sample of the data and partitioning the input key space using the random sample suffices to achieve the desired load balance w.h.p.\footnote{with high probability. In our context, $\geq (1-\mathcal{O}(p^{-c}))$ for some fixed $c>0$}  so long as $\Theta(p\log N/\epsilon^{2})$ keys are collected in the sample
~\cite{huang1983parallel}.
A deterministic balanced splitting is also possible via sampling, for example, using sample sort with regular sampling~\cite{li1993versatility,shi1992parallel}. With regular sampling, the algorithm collects $p/\epsilon$ keys from each processor, that partition the local input data on each processor evenly, requiring a total sample size of $\Theta(p^{2}/\epsilon)$ from all processors for a balanced split.
However, these classical results leave substantial room for improvement. We show that it suffices to collect a number of samples that scales near-linearly with $p$ and logarithmically with $1/\epsilon$.


Histogram sort~\cite{kale1993comparison,solomonik2010highly}, which embodies the histogramming technique, iteratively refines a partition (set of splitters), by repeatedly collecting histograms of the total number of input keys in each interval induced by the latest set of splitters. 
The number of histogramming rounds required to determine all splitters within the allowed threshold is bounded by $O(\log\mathcal{Z})$, where $\mathcal{Z}$ is the size of the input domain. For skewed distributions, the number of rounds could be large and the use of the input domain implies that Histogram sort is not a pure comparison-based sorting algorithm. 

Recently, Axtmann et al.~\cite{axtmann2015practical} proposed a scheme based on the histogram of the partition induced by a random sample.
They show that using $\Theta(p(\log (p)+1/\epsilon))$ samples results in an $\epsilon$-balanced partition w.h.p..
Our main contribution is demonstrating that by using $\log (\log (p)/\epsilon)$ steps of refinement with histogramming, $\Theta(p\log (\log (p)/\epsilon))$ samples in total suffice for an $\epsilon$-balanced partition. Our algorithm improves the communication cost for the partitioning step (proportionally to the reduction in sample size), at the cost of a small increase in the number of parallel steps (BSP supersteps / synchronizations).
This factor of improvement also holds if the partitioning schemes are used in a multi-stage fashion, for example by first splitting the data into $\sqrt{p}$ parts, then sorting each part recursively with $\sqrt{p}$ processors.

The improvement in cost warrants the introduction of an algorithm that combines sampling and histogramming, which we call Histogram sort with sampling (HSS). HSS carefully weaves together standard techniques in such a way that the resulting algorithm is provably better than the state of the art. The analysis of the algorithm is nontrivial;  
the main challenge resolved in this paper is in identifying and proving an invariant that shows global quadratic convergence of the partitioning algorithm. The main intuition behind the algorithm and proof comes from consideration of {\it splitter intervals}, which are subranges around ideal splitter keys in the globally sorted order of input keys. In each round, HSS uniformly samples keys in the union of all splitter intervals, then tightens   each splitter interval using a histogram of the collected sample. Our main analytical result is that the size of the union of the splitter intervals decreases geometrically with the number of rounds.

By characterizing HSS, we establish the theoretical soundness of iterative histogramming as a technique, that is known to be effective~\cite{kale1993Charm++,solomonik2010highly} in practice.
Our algorithm is simple, provably robust to arbitrary distributions with repeated keys, and effective in practical scenarios where the input is already partially sorted.
We provide a parallel Charm++ implementation of the HSS algorithm and demonstrate improvements over one of the fastest publicly-available distributed-memory parallel sorting algorithms, HykSort~\cite{sundar2013hyksort} in both single-staged and multi-staged settings.
Additionally, we show that our algorithm improves performance with respect to Histogram sort within the ChaNGa $N$-body code~\cite{jetley2008massively}, which uses sorting every time-step to distribute moving cosmological bodies along a space-filling curve.
Our theoretical analysis of parallel execution cost, comparative performance evaluation, and application case study unanimously identify HSS as the preferred parallel sorting algorithm. We have made our code available online~\cite{hssCode}.

\section{Problem Statement}

Let $A(0),\ldots,A(N-1)$ be an input sequence distributed across $p$ processing elements, such that each processor has $N/p$ keys. We assume that there are no duplicates in the input. In Section~\ref{duplicates} we discuss how to reduce a sorting problem with duplicate keys to a sorting problem with no duplicates, with very little overhead. 
Our proofs and algorithm also translate to scenarios where input keys are not evenly distributed across each processor. 
 Parallel sorting corresponds to redistributing and reordering the elements so that the $i$th processor owns the $i$th subsequence of 
keys in the sequence $I(0),\ldots, I(N-1)$, where $\{I(0),\ldots, I(N-1)\}=\{A(0),...,A(N-1)\}$ and $I(i)\leq I(i+1)$. 
We say that key $A(j)=I(r)$ has rank $r$.
In practice, keys are typically associated with values, but in the context of the algorithms we study, handling values of a given size is straightforward.

It is common to additionally require that the resulting distribution of sorted keys is load balanced among processors.
We compare algorithms with the standard assumption that the distribution is {\it locally balanced}, i.e. each processor owns no more than $(1+\epsilon)N/p$ keys.
However, our algorithm achieves a stronger guarantee, namely that the distribution is {\it globally balanced}, i.e. processor $i$ owns all keys greater than or equal to $\skey{i}$ and less than $\skey{i+1}$, where each $\skey{i}$ is a {\it splitter} that satisfies $\skey{i} = I(\srank{i})$, with
 \begin{align*}
 &  \quad \quad \quad \quad \quad \quad \quad \srank{i}\in\mathcal{T}_{i}, \\
 & \text{where {\it target range: }}\ \mathcal{T}_{i} = \Big[\frac{Ni}{p}{-}\frac{N\epsilon}{2p}, \frac{Ni}{p}{+}\frac{N\epsilon}{2p}\Big]
 \end{align*}
One practical advantage of a globally balanced distribution in the context of iterative applications, is that if the initial distribution is nearly sorted and globally balanced, the data exchange step is guaranteed to require little data movement.

We note that, given either type of load-balanced splitting, post-processing may be done to obtain an {\it exact} splitting~\cite{cheng2006novel}. For a locally balanced distribution, this might require some processors to potentially communicate all of their data to one or two other processors. However, given a globally balanced distribution, achieving an exact splitting would require communicating only at most $N\epsilon/p$ keys per processor.
Therefore, a more fundamental distinction is in whether a parallel sorting algorithm {\it maintains} load balance at all times, i.e. no processor is ever assigned more than $(1+\epsilon)N/p$ keys. Satisfying this condition permits bounded memory footprint, which is desirable for a parallel sorting library implementation.

The focus of this paper is on the data-partition step of partition-based sorting algorithms. Sample sort by regular sampling~\cite{shi1992parallel,li1993versatility}, histogram sort~\cite{kale1993comparison,solomonik2010highly}, sample sort by random sampling~\cite{frazer1970samplesort,blelloch1998experimental}, parallel sorting by over partitioning~\cite{li1994parallel}, AMS sort~\cite{axtmann2015practical}, HyKSort~\cite{sundar2013hyksort} fall into this category. Partition-based sorting algorithms determine a set of splitters that achieve either a locally or globally balanced splitting, then redistribute keys. The algorithm can run in multiple {\it stages} by splitting up data among subsets of processors  and sorting recursively within each subset.
In section~\ref{running-times},  we evaluate the time complexity of HSS and a multi stage variant of HSS using the standard 
  Bulk Synchronous Parallel (BSP) model by Valiant~\cite{valiant1990bridging}.%
\section{Related work}
Sample sort~\cite{frazer1970samplesort} and histogram sort~\cite{kale1993comparison} are closely related to our algorithm, we review these and other sorting algorithms before proceeding to our main result. 

\subsection{Sample sort} \label{Sample sort}
Sample sort~\cite{frazer1970samplesort,blelloch1991comparison,helman1998new,li1993versatility,shi1992parallel} is a standard well studied parallel sorting algorithm. Sample sort samples $s$ keys from each processor, and sends them to a central processor to form an overall sample of size $M = ps$ keys. Let $\Lambda = \{\lambda_{0}, \lambda_{1} ..., \lambda_{ps-1}\}$ denote the combined sorted sample. Sample sorting algorithms choose $p-1$ keys from $\Lambda$ as the final splitters. Generally, sample sort algorithms consist of the following three-phase skeletal structure.
\begin{enumerate}[leftmargin=*]
\item{\textbf{Sampling Phase}}:
Every processor samples $s$ keys and sends it to a central processor. $s$ is often referred to as the oversampling ratio. See Section \ref{Sample sort:details} for sampling methods. 

\item{\textbf{Splitter Determination}}:
 The central processor receives samples of size $s$ (from Step 1) from every processor resulting in a combined sample $\Lambda$ of size $(ps)$. The central processor then selects splitter keys: $\mathcal{S} = \{\skey{1}, \skey{2} ..., \skey{p-1}\}$ from  $\Lambda$  by picking evenly spaced keys from $\Lambda$. The splitters partition the key range into $p$ ranges, each range assigned to one processor.  Once chosen, the splitters are broadcast to all processors.

\item{\textbf{Data Exchange}}:
 Once a processor receives the splitter keys,  it sends each of its keys to their destination processor. As discussed earlier, a key in range $[\skey{i}, \skey{i+1})$ goes to processor $i$. This step is akin to one round of all-to-all communication and places all input data onto their assigned destination processors. Once a processor receives all data that is assigned to it, it merges them using a sequential algorithm, like merge sort.

\end{enumerate}
\subsection{Sample sort: Sampling methods} \label{Sample sort:details}
We discuss two sampling methods- random sampling and regular sampling, for the sampling phase (step 1) of sample sort.

\subsubsection{Random sampling}

With random sampling as described by Blelloch et al.~\cite{blelloch1998experimental}, each processor divides its local sorted input into $s$ blocks of size $(N/ps)$ and samples a random key in each block, where $s$ is the oversampling ratio. The splitters are chosen by picking evenly spaced keys from the overall sample of size $ps$, collected from all processors. Of particular reference to our work is the following theorem, (Lemma $B.4$ in~\cite{blelloch1998experimental}).
\begin{theorem}
With $\mathcal{O}\big(\frac{p\log N}{\epsilon^{2}}\big)$ samples overall, sample sort with random sampling achieves $(1+\epsilon)$ load balance w.h.p..
\end{theorem}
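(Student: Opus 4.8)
The plan is to bound, for each individual splitter, the probability that its induced rank falls outside the target range $\mathcal{T}_i$, and then take a union bound over all $p-1$ splitters. Fix a target index $i$ and let $\skey{i}$ be the splitter chosen as the element of the combined sorted sample $\Lambda$ of rank roughly $\lceil is\rceil$ (out of $ps$). With random sampling, every sampled key is drawn independently from a distinct block of $N/(ps)$ consecutive input keys, so the number of sampled keys with true rank at most $r$ is a sum of independent indicator variables with mean close to $rs/N$. Thus $\skey{i}$ lands at true rank below $Ni/p - N\epsilon/(2p)$ only if fewer than $\approx is$ sample points fall among the first $Ni/p - N\epsilon/(2p)$ input keys, an event whose expected count is about $is - \epsilon s/2$; this is a deviation of $\Theta(\epsilon s)$ below a mean of $\Theta(is) \le \Theta(ps)$.

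The key step is a Chernoff bound on this deviation. Writing the relevant sum as $X = \sum X_j$ of independent $0/1$ variables with $\mu = \mathbb{E}[X]$, a deviation of $t = \Theta(\epsilon s)$ has failure probability $\exp(-\Theta(t^2/\mu))$. In the worst case $\mu = \Theta(ps)$ (the middle splitter), giving $\exp(-\Theta(\epsilon^2 s^2/(ps))) = \exp(-\Theta(\epsilon^2 s / p))$. Setting the overall sample size $M = ps$ so that $s = \Theta(p \log N / \epsilon^2)$ makes this probability $\exp(-\Theta(\log N)) = N^{-\Omega(1)}$. A symmetric argument bounds the probability that $\skey{i}$ overshoots $Ni/p + N\epsilon/(2p)$. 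One has to handle the blocks that straddle the boundary rank $Ni/p \pm N\epsilon/(2p)$ carefully — a sample point from a straddling block may or may not count — but there are at most two such blocks per splitter and each contributes $O(1)$ slack, which is absorbed into the constants since $\epsilon N / p \gg 1$ in the regime of interest (otherwise the statement is about exact splitting and the sample size dominates $N$).

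The main obstacle is getting the dependence on $i$ right in the Chernoff exponent: the naive bound uses $\mu \le ps$ uniformly, which is what forces $s = \Theta(p\log N/\epsilon^2)$ and hence $M = \Theta(p^2 \log N/\epsilon^2)$ — but the claimed bound is $\mathcal{O}(p\log N/\epsilon^2)$ samples \emph{overall}, i.e. $M = \Theta(p\log N/\epsilon^2)$ and $s = \Theta(\log N/\epsilon^2)$. To recover this, I would use the multiplicative Chernoff bound in the form $\Pr[|X-\mu| \ge \delta \mu] \le 2\exp(-\delta^2\mu/3)$ with $\delta = \Theta(\epsilon/i) \cdot (\text{something})$; more precisely, the relative width of the target range around rank $Ni/p$ is $\epsilon/(2i)$, so the relevant relative deviation is $\delta = \Theta(\epsilon/i)$ against mean $\mu = is$, yielding exponent $\Theta(\epsilon^2 s / i)$. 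This is weakest at $i = p$, giving $\Theta(\epsilon^2 s/p)$ again — so in fact the union bound over $p$ splitters combined with the worst case does require relating $M = ps$ to $p\log N/\epsilon^2$; I would double-check whether the intended reading is $M = \Theta(p\log N/\epsilon^2)$ with a sharper per-splitter analysis (e.g. only the extreme splitters are hard and there are few of them, or a Bernstein bound exploiting small variance near the ends), or simply cite Lemma B.4 of~\cite{blelloch1998experimental} verbatim for the exact constants. Given the excerpt explicitly attributes the statement to that lemma, the cleanest route is to reproduce its Chernoff-plus-union-bound argument and defer the constant-tuning to the reference.
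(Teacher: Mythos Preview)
The paper does not give its own proof of this theorem; it is stated as background and attributed to Lemma~B.4 of Blelloch et al. So there is nothing in-paper to compare against, but your attempt has a genuine gap worth naming.

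You are aiming at the wrong target. The theorem asserts $(1+\epsilon)$ \emph{load balance} --- no bucket receives more than $(1+\epsilon)N/p$ keys --- not the stronger globally-balanced condition that the $i$th splitter's rank lands in $\mathcal{T}_i$. Your Chernoff calculation is for the latter, and the wall you hit is real, not an artifact of a loose bound: for $i\approx p/2$ the rank of the $is$-th sample order statistic fluctuates by order $N/\sqrt{ps}$, so pinning it to within $N\epsilon/(2p)$ genuinely forces $s=\Omega(p/\epsilon^2)$. Neither a Bernstein inequality nor a ``few extreme splitters'' argument rescues this --- if you symmetrize by counting from the top for large $i$, the endpoints become easy but the \emph{middle} splitters ($i\approx p/2$) become the bottleneck, and there are $\Theta(p)$ of those.

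The correct route is to bound bucket sizes directly. Some bucket exceeds $(1+\epsilon)N/p$ only if there exists a window of $(1+\epsilon)N/p$ consecutive global ranks containing fewer than $s$ sample points. For any fixed such window the sample count inside is a sum of independent $\{0,1\}$ variables (one per block) with mean exactly $(1+\epsilon)s$, and the multiplicative Chernoff lower tail with relative deviation $\delta=\epsilon/(1+\epsilon)$ gives failure probability $\exp\big(-\Theta(\epsilon^2 s)\big)$ --- crucially with no factor of $p$ in the exponent, because the relevant mean is $\Theta(s)$ rather than $\Theta(is)$. A union bound over the $O(N)$ window starting positions then needs only $s=\Theta(\log N/\epsilon^2)$, i.e.\ $M=ps=\Theta(p\log N/\epsilon^2)$ samples overall, which is the claim.
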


\subsubsection{Regular sampling}
With regular sampling~\cite{li1993versatility, shi1992parallel}, 
every processor deterministically picks $s$  evenly spaced keys from its local sorted data. The central processor collects these samples and selects splitters from this sample, just like random sampling. 
We reproduce the following theorem from ~\cite{li1993versatility,shi1992parallel}.
\begin{theorem}
If $s=\frac{p}{\epsilon}$ is the oversampling ratio, then sample sort with regular sampling achieves $(1+\epsilon)$ load balance.
\end{theorem}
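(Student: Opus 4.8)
The plan is the standard double-counting argument for regular sampling: bound, for each chosen splitter, its rank in the globally sorted input, and then note that the number of keys landing on any processor is a difference of two such ranks. Write $s=p/\epsilon$ for the oversampling ratio and assume for convenience that $ps\mid N$ (the general case changes every estimate by at most $O(1)$ per block and is absorbed into the slack). Each processor sorts its $N/p$ local keys, cuts them into $s$ consecutive blocks of $N/(ps)$ keys each, and contributes the smallest key of each block to the sample; the combined sample $\Lambda$ then has $ps$ keys, and the $k$-th splitter $\skey{k}$ is the element of rank $ks$ in sorted $\Lambda$ for $k=1,\dots,p-1$, with the conventions $\skey{0}=-\infty$, $\skey{p}=+\infty$, $\mathrm{rank}(\skey{0})=0$, $\mathrm{rank}(\skey{p})=N$.

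First I would prove an upper bound on the global rank of $\skey{k}$. Let $c_j$ be the number of samples from processor $j$ that are strictly less than $\skey{k}$; since all keys (hence all samples) are distinct, exactly $ks-1$ samples are strictly less than $\skey{k}$, so $\sum_j c_j = ks-1$. On processor $j$ the $(c_j{+}1)$-st sample sits at local rank $c_j\cdot N/(ps)$ and is $\ge\skey{k}$ (or does not exist, if $c_j=s$); because the local keys are sorted, every local key that is $<\skey{k}$ has local rank $<c_j\cdot N/(ps)$, so processor $j$ holds at most $c_j\cdot N/(ps)$ keys below $\skey{k}$. Summing over $j$ gives that fewer than $(N/(ps))\cdot ks = kN/p$ input keys are $<\skey{k}$, i.e. $\mathrm{rank}(\skey{k})\le kN/p$. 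Symmetrically, for the lower bound let $d_j$ be the number of samples from processor $j$ that are $\le\skey{k}$, so $\sum_j d_j = ks$; the $d_j$-th sample of processor $j$ has local rank $(d_j{-}1)N/(ps)$ and is $\le\skey{k}$, so processor $j$ holds at least $(d_j{-}1)\cdot N/(ps)$ keys that are $\le\skey{k}$ (the bound being vacuous but still valid when $d_j=0$). Summing yields at least $(N/(ps))(ks-p) = kN/p - N/s$ input keys $\le\skey{k}$, hence $\mathrm{rank}(\skey{k})\ge kN/p - N/s$.

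Finally I would combine the two bounds: processor $i$ receives exactly the keys in $[\skey{i},\skey{i+1})$, whose count is
\[
\mathrm{rank}(\skey{i+1})-\mathrm{rank}(\skey{i}) \;\le\; \frac{(i+1)N}{p} - \Big(\frac{iN}{p}-\frac{N}{s}\Big) \;=\; \frac{N}{p}+\frac{N}{s} \;=\; \frac{(1+\epsilon)N}{p},
\]
using $s=p/\epsilon$; the boundary processors $0$ and $p-1$ obey the same inequality via $\mathrm{rank}(\skey{0})=0$ and $\mathrm{rank}(\skey{p})=N$. This establishes $(1+\epsilon)$ load balance.

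I expect the only real work to be the off-by-one bookkeeping rather than any conceptual difficulty: pinning down which key of each block is sampled, tracking strict versus non-strict inequalities consistently through both bounds, discharging the divisibility assumption $ps\mid N$ with floors, and checking the two boundary processors. The single idea that drives everything is that each sample ``vouches for'' a block of $N/(ps)$ consecutive local keys, which simultaneously gives the matching upper and lower bounds on every splitter's rank and thereby pins each processor's load to within an additive $N/s$ of the ideal $N/p$.
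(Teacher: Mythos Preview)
The paper does not give its own proof of this theorem; it merely reproduces the statement from the cited references~\cite{li1993versatility,shi1992parallel}. Your proposal is precisely the standard regular-sampling argument from those references: bound each splitter's global rank from above and below by counting how many sample blocks it dominates on each processor, then difference the bounds. The argument is correct, and the off-by-one and divisibility issues you flag are exactly the only places requiring care; in particular, once you track that the upper bound actually gives $\mathrm{rank}(S(k))\le kN/p - N/(ps)$ (from $\sum_j c_j = ks-1$) and the lower bound gives at least $kN/p - N/s$ keys $\le S(k)$, the extra $-N/(ps)$ absorbs the stray $+1$ and the final inequality is exact rather than asymptotic.
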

Because of the large number of samples required, the sampling phase is unscalable for regular sampling. 
Sample sort with random sampling is more efficient, but scalability is still hindered in practice because of the large sample size required to achieve a load-balanced splitting.

\subsection{Histogram Sort}
Histogram sort~\cite{kale1993comparison,solomonik2010highly} addresses load imbalance by determining the splitters more reliably. Instead of determining all splitters using one large sample, it maintains a set of candidate splitter keys and performs multiple rounds of histogramming, refining the candidates in every round. 
Computing the histogram of a set of candidate keys gives the global rank of each candidate key. This information is used by the algorithm to finalize splitters or to refine the candidate splitter keys.
Once all the splitters are  within the given threshold, it finalizes the splitter keys from the set of candidate keys. The data exchange phase of Histogram sort is identical to the third phase of sample sort. 
We give an overview of the splitter determination step in histogram sort.

\begin{enumerate}[leftmargin=*]
\item  The central processor broadcasts a probe consisting of $M$ sorted keys to all processors. Usually, the initial probe is spread out evenly across the key range (unless additional distribution information is available).
\item Every processor counts the number of keys in each range defined by the probe keys, thus, computing a local histogram of size $M$.
\item Local histograms are summed to obtain a global histogram at a central processor using an $M$-item reduction.
\item The central processor finalizes and broadcasts the splitters if a probe key within the desired range has been found for each of the $p-1$ unknown splitters. Otherwise, it refines its probes using the histogram obtained and broadcasts a new set of probes for the next round of histogramming, in which case the algorithm loops back to Step 2.
\end{enumerate}


Candidate keys are refined by splitting the input key range between successive candidate keys according to their ranks~\cite{solomonik2010highly}.
Histogram sort is guaranteed to achieve any arbitrary specified level of load balance. It is also scalable in practice for many input distributions, since the size of the histogram every round is typically kept small - of the order $O(p)$. The number of histogramming rounds required to determine all splitters within the allowed threshold is at most $\log_2(\mathcal{Z})$, where $\mathcal{Z}$ is the range of the input i.e. maximum key minus the minimum key (treating $\epsilon$ as a constant here).  The number of rounds can be large, especially for skewed input distributions. 
Histogram sort has been successfully employed in real world, highly parallel scientific applications, for instance \textit{ChaNGa}~\cite{jetley2008massively}.

\subsection{Other Sorting Algorithms}

In parallel sorting by over partitioning~\cite{li1994parallel}, proposed for shared memory multiprocessors, every processor picks a random sample of size $pks$ from its local input and sends it to a central processor. The central processor sorts the overall collected sample and choses $pk-1$ splitters by selecting the $s^{th}, 2s^{th}, ..., (pk-1)s^{th}$ keys. These splitters partition the input into $pk$ buckets, more than required. 
The splitters are made available to all processors and the local input is partitioned into sublists based on the splitters. These sublists form a task queue 
and each processor picks one task at a time and processes it by copying the data to the appropriate position in the memory, determined using the splitters. 
The idea of over partitioning is closely related to histogramming. 
Recent work on sorting algorithms for asymmetric read and write costs~\cite{asymmetricSorting} and low cache complexity~\cite{cacheObliviousAlgorithms} are complimentary to our work and can be used in combination with HSS.

\subsubsection{Merge based parallel sorting algorithms}
In this paper, we primarily focus on partition-based algorithms. \textit{Merge-based} algorithms are another class of sorting algorithms that merge data in parallel using sorting networks. An early result was due to Batcher~\cite{batcher1968sorting} which uses time (or equivalently depth in a sorting network) $\mathcal{O}(\log^{2} N)$ with $N$ processors. The AKS network~\cite{ajtai19830} was the first sorting circuit of depth $\mathcal{O}(\log N)$, but had large constants because of the use of expander graphs~\cite{cole1988note,paterson1990improved}. Later, Cole~\cite{cole1988parallel} proposed a circuit that also ran in $\mathcal{O}(\log N)$ time using $N$ processors, but had smaller constants. A communication optimal algorithm in the BSP model was proposed by Goodrich~\cite{goodrich1999communication}. Cole's merge sort and its adaptation to BSP by Goodrich follow a merge-tree, but employ sampling to determine a partition that accelerates merging. Overall, unless data-partitioning schemes are also employed, merge-based algorithms tend to be less performant due to their need for more BSP supersteps for the expensive data-exchange step and in some cases more communication than partition-based alternatives.

\subsection{Large scale parallel sorting algorithms}

Several recent works have focused on large scale sorting. HykSort~\cite{sundar2013hyksort}, a state of the art practical algorithm, employs multi-staged splitting and communication to achieve better scalability. HykSort is a hybrid of sample sort and hypercube quick sort. Even though Hyksort's algorithm for splitter selection also uses sampling and histogramming, there is a key difference in the sampling method between HSS and HykSort (see Section~\ref{k rounds hss: algorithm}). As we show in Appendix~\ref{hyksort analysis}, this is critical for the running time as HykSort requires at least $\Omega \big(\log (p)\big/\log^{2}\log (p)\big)$ times more samples than HSS in the worst case. Our experiments confirm faster convergence in HSS and benefits of HSS over Hyksort in both single-staged and multi-staged settings (see section~\ref{experiments}).

AMS-sort~\cite{axtmann2015practical} employs overpartitioning for splitting. AMS-sort's scanning algorithm (Lemma 2 of~\cite{axtmann2015practical}), used to select splitters, is better than HSS with one round of histogramming by a factor of $\Theta(\min(\log p,1/\epsilon ))$. However, HSS with multiple rounds of histogramming is more efficient than AMS-sort. The scanning algorithm does not easily generalize to multiple rounds of histogramming. Further, HSS achieves a globally balanced partition, while AMS-sort achieves only a locally-balanced splitting, providing less robustness in preservation of existing distributions.
AMS-sort can be performed in a multi-stage fashion, with successive steps of splitting and data exchange across a decreasing set of processors. HSS can run in the same multi-stage fashion, but with each data partitioning step done with multiple rounds of histogramming. We compare the asymptotic running times of AMS-sort and HSS with multiple stages in Table~\ref{running times table multistage}. 



\begin{figure*}
  \begin{minipage}{0.95\textwidth}
    \centering
      \hskip 0.1 cm
       \vskip -0.2 cm
\includegraphics[width=0.9\textwidth,height=95px]{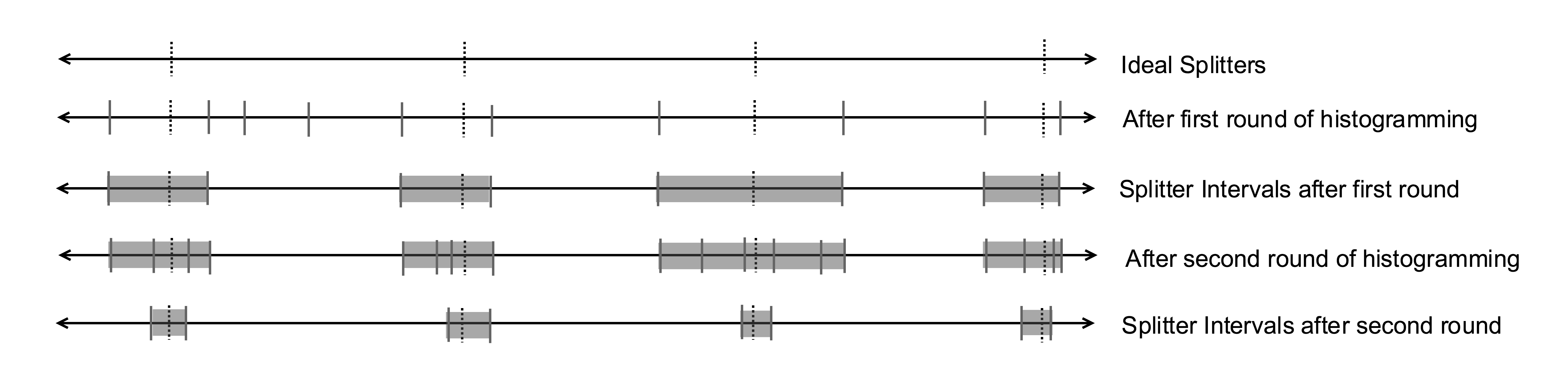} 
\compactcaption{Figure illustrating HSS with multiple rounds. After first round, samples are picked only from the splitter intervals, in proportional to the interval length. Notice how the splitter intervals shrink as the algorithm progresses.}
\label{figure:splitterIntervals}
  \end{minipage}

\end{figure*}

\subsection{Single stage AMS sort} \label{ams algorithm}






The single stage AMS-sort~\cite{axtmann2015practical} collects a single set of samples, performs one round of histogramming, then picks a locally balanced splitting based on the histogram.
The splitters obtained after the first histogramming round achieve the specified level of load balance w.h.p. with an oversampling parameter that is much less than sample sort with random sampling. In Section~\ref{running-times}, we show that the cost of histogramming is asymptotically same as the cost of sampling an equal number of keys, so AMS sort achieves a clear theoretical improvement over sample sort.
We review the AMS algorithm in some detail, due to its close relation to our approach.

\subsubsection{Scanning algorithm}
AMS sort uses a scanning technique to decide the splitters once the histogram is obtained. The algorithm scans through the histogram and assigns a maximal number of consecutive buckets (all keys between two consecutive keys in the total sorted sample) to each processor. Specifically, after assigning $i$ buckets to the first $j$ processors, it assigns buckets $i+1,\ldots, i+k$ to processor $j+1$, where $k\geq 0$ is picked maximally so that the total load on processor $j+1$ does not exceed $N(1+\epsilon)/p$. The last processor gets all the remainder elements. If the sample size is sufficiently large, the average load on the first $p-1$ processors is greater than $N/p$ w.h.p.. 

In particular, a sample of size $\Theta(p(\log p+ 1/\epsilon))$ is necessary to achieve a locally balanced partitioning w.h.p.. Demonstrating this formally is difficult due to the conditional dependence of loads assigned to consecutive processors. We formalize the proof of a key lemma in the analysis of scanning algorithm in~\cite{axtmann2015practical} (see appendix A). 
In Table~\ref{running times table} we compare the cost of AMS sort to versions of sample sort and HSS. AMS sort achieves a lower asymptotic complexity than HSS with a single round of histogramming. However, HSS can achieve an asymptotically lower complexity in $\mathcal{O}(1)$ BSP supersteps and even lower complexity with $\mathcal{O}(\log \log p/\epsilon)$ supersteps while at the same time providing a globally balanced distribution.

\section{Histogram Sort with Sampling}

The basic skeleton of HSS is similar to that of Histogram Sort. In addition, HSS employs sampling to determine the candidate probes for histogramming. Every histogramming round is preceded by a sampling phase where each processor samples local keys and the overall sample collected from all processors is used for the histogramming round. By histogramming on the sample, HSS requires significantly fewer samples compared to sample sort. 

\subsection{HSS with one histogram round} 
\label{one round}

We first describe HSS with one round of histogramming, whose data-partitioning step is slightly less efficient than AMS sort, by a factor of $\Theta(\min(\log p,1/\epsilon ))$. However, HSS achieves a globally-balanced splitting because of which HSS with one round is easily generalizable to multiple rounds of histogramming, which we discuss in subsequent sections and is the main contribution of this paper.
Extending the single round scanning algorithm of AMS sort to multiple rounds for improved complexity is non-trivial.  With multiple rounds of histogramming, HSS is more efficient than the scanning algorithm of AMS-sort, in fact, $\mathcal{O}(1)$ rounds of histogramming suffice for an asymptotic improvement.

Recall that in HSS, 
a satisfactory $i$th splitter (in terms of global load balance) is found when a candidate key is found that is known to have rank in target range $\mathcal{T}_{i}= [Ni/p - N\epsilon/2p, Ni/p + N\epsilon/2p]$. If for each target range $\mathcal{T}_{i}$, the sample contains at 
least one key with rank in $\mathcal{T}_{i}$, then after histogramming on the sample, all such splitters will be found. Intuitively, the algorithm should sample adequate number of keys so that at least one key is picked from each $\mathcal{T}_{i}$ w.h.p..




\begin{lemma} \label{one round lemma}
If every key is independently picked in the sample with probability, $\frac{ps}{N} = \frac{2p\ln p}{\epsilon N}$, where $s$, the oversampling ratio is chosen to be $\frac{2\ln p}{\epsilon}$, then at least one key is chosen with rank in $\mathcal{T}_{i}$ for each $i$, w.h.p..
\end{lemma}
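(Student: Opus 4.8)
The plan is a direct first-moment / union-bound argument. First I would observe that the target range $\mathcal{T}_{i}$ is an interval of $N\epsilon/p$ consecutive rank values, so exactly $\Theta(N\epsilon/p)$ input keys have rank lying in $\mathcal{T}_{i}$; call this set of keys $K_{i}$, with $|K_{i}| \geq N\epsilon/p$ (assuming, as we may, that $N\epsilon/p \geq 1$ so the range is nonempty). Since each key is included in the sample independently with probability $ps/N$, the probability that \emph{no} key of $K_{i}$ is sampled is $(1 - ps/N)^{|K_{i}|}$.

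Next I would bound this using $1 - x \leq e^{-x}$, which gives
\[
\Pr[\,\text{no key of } K_{i} \text{ is sampled}\,] \;\leq\; \exp\!\Big({-}\frac{ps}{N}\,|K_{i}|\Big) \;\leq\; \exp\!\Big({-}\frac{2p\ln p}{\epsilon N}\cdot\frac{N\epsilon}{p}\Big) \;=\; e^{-2\ln p} \;=\; p^{-2},
\]
where the middle inequality substitutes $ps/N = 2p\ln p/(\epsilon N)$ and $|K_{i}| \geq N\epsilon/p$. Then a union bound over the $p-1$ splitters $i = 1,\ldots,p-1$ yields that the probability that some $\mathcal{T}_{i}$ receives no sampled key is at most $(p-1)\cdot p^{-2} < 1/p$. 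Hence with probability at least $1 - 1/p$ every target range contains at least one sampled key, which is the claimed w.h.p.\ guarantee (here the constant $c$ in the definition of w.h.p.\ is $1$).

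I do not anticipate a serious obstacle; this lemma is the "easy" building block. The only points requiring minor care are: (i) being precise that $\mathcal{T}_{i}$ contains $\lceil N\epsilon/p\rceil$ or so integer ranks and hence at least $N\epsilon/p$ keys (with the implicit assumption $N\epsilon/p \geq 1$, i.e.\ the oversampling regime of interest), so that the exponent evaluates to exactly $-2\ln p$; and (ii) noting that the events "$K_{i}$ receives a sample" for different $i$ need not be independent, which is why the union bound (rather than a product) is the right tool. If one wants a stronger $p^{-c}$ for arbitrary fixed $c$, the same computation with oversampling ratio $s = (c+1)\ln p/\epsilon$ (i.e.\ replacing $2$ by $c+1$) works verbatim.
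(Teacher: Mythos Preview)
Your proposal is correct and follows essentially the same argument as the paper: compute $(1-ps/N)^{|\mathcal{T}_i|}\le e^{-2\ln p}=p^{-2}$ for a single target range, then take a union bound over the $p-1$ splitters to get failure probability below $1/p$. Your write-up is in fact slightly more careful than the paper's in making explicit the use of $1-x\le e^{-x}$ and the reason a union bound (rather than independence) is needed.
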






\begin{proof} Recall that the input set is denoted by $A$. The size of $\mathcal{T}_{i} = N\epsilon/p$.  The probability that no key is chosen with rank in $\mathcal{T}_{i}$ in the overall sample for a given $i$ is given by,  
\begin{align*}
\Big(1 - \frac{ps}{N}\Big)^{|\mathcal{T}_{i}|} &= \Big(1 - \frac{2p\ln p}{\epsilon N}\Big)^{\frac{N\epsilon}{p}} 
\leqslant e^{-2\ln p} = \  \frac{1}{p^{2}}
\end{align*}
Since there are $p-1$ splitters, the probability that no key is chosen from some $\mathcal{T}_{i}$, is at most $(p-1)\times p^{-2} < 1/p$. 
\end{proof}
Lemma~\ref{one round lemma} leads us to the following theorem, showing global load balance of HSS with one round.
The theorem will also be useful in the analysis of multiple rounds of histogramming, each round of which effectively increases the oversampling ratio by collecting the same number of samples from a smaller subset of the complete set of keys.

\begin{theorem} \label{one round result}
With one round of histogramming and sample size $\mathcal{O}(\lfrac{p\log (p)}{\epsilon})$, HSS achieves $(1{+}\epsilon)$ load balance w.h.p..
\end{theorem}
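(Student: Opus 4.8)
The plan is to derive the theorem as a direct consequence of Lemma~\ref{one round lemma} together with one structural fact: a single histogramming round reveals the \emph{exact} global rank of every sampled key. First I would fix the oversampling ratio $s = 2\ln p/\epsilon$, so that the per-key inclusion probability $ps/N = 2p\ln p/(\epsilon N)$ of Lemma~\ref{one round lemma} corresponds to an expected total sample size $ps = 2p\ln p/\epsilon = \mathcal{O}(p\log(p)/\epsilon)$. A Chernoff bound then shows the realized sample size exceeds, say, twice its mean only with probability $p^{-\Omega(1)}$, so the stated $\mathcal{O}(p\log(p)/\epsilon)$ bound on the sample holds w.h.p. (If one prefers the concrete per-block scheme that draws exactly $ps$ samples, the size is deterministic and this step is vacuous.)

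Next I would condition on the good event of Lemma~\ref{one round lemma}: for every $i \in \{1,\dots,p-1\}$ the collected sample contains at least one key whose rank lies in the target range $\mathcal{T}_i = [Ni/p - N\epsilon/(2p),\, Ni/p + N\epsilon/(2p)]$. The histogram of the sorted sample gives, for each induced interval, the number of input keys it contains; taking prefix sums yields the exact rank $R(\lambda)$ of each sampled key $\lambda$. Hence for each $i$ the algorithm can pick a sampled key $S(i)$ with $\chi(i) := R(S(i)) \in \mathcal{T}_i$ and declare it the $i$th splitter, which is precisely the globally balanced splitting condition.

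Finally I would check that this splitting is $(1+\epsilon)$-load-balanced. Processor $i$ with $1 \le i \le p-2$ receives the keys of rank in $[\chi(i),\chi(i+1))$, so its load is at most
\begin{align*}
\chi(i+1) - \chi(i) &\le \Big(\frac{N(i{+}1)}{p} + \frac{N\epsilon}{2p}\Big) - \Big(\frac{Ni}{p} - \frac{N\epsilon}{2p}\Big) = \frac{N(1+\epsilon)}{p};
\end{align*}
the first processor receives the keys of rank $<\chi(1)$, a load of at most $N/p + N\epsilon/(2p) \le N(1+\epsilon)/p$, and symmetrically for the last processor. The chain of implications here is routine; the only real subtlety — and the point I expect to need the most care — is reconciling the independent-Bernoulli sampling model of Lemma~\ref{one round lemma} with an implementable scheme, i.e.\ arguing that per-block sampling (one uniform key from each of $ps$ equal blocks of the local sorted input) makes the coverage events of Lemma~\ref{one round lemma} at least as likely, or re-running that argument directly for the block scheme. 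I would also make sure the sample-size concentration bound and the $p-1$ per-splitter coverage events are combined under a single union bound so the overall failure probability stays $\mathcal{O}(p^{-c})$.
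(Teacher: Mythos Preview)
Your proposal is correct and follows essentially the same approach as the paper: the paper gives no explicit proof, simply stating that Lemma~\ref{one round lemma} ``leads us to'' the theorem, and your argument spells out exactly that implication (histogramming reveals exact ranks, so a sample key in each $\mathcal{T}_i$ yields a globally balanced splitting). Your extra care about sample-size concentration and reconciling Bernoulli versus per-block sampling is fine but unnecessary here, since HSS as defined in the paper uses the independent-Bernoulli scheme directly.
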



\subsection{HSS with multiple rounds} \label{two rounds}
We show that HSS can be made more efficient by repeated rounds of sampling followed by histogramming. We build upon the key observation that after the first round of histogramming, samples for subsequent histogramming rounds can be intelligently chosen using results from previous rounds. 

\subsubsection{Sampling method}  \label{hss sampling method}

For the sampling phases, our algorithm chooses a sample from a subset $\gamma$ of the input. Initially, $\gamma$ represents the entire input. As the algorithm progresses, $\gamma$ gets smaller. HSS uses the following sampling method.

\textbf{Sampling Method}: Every key in $\gamma$ is independently chosen to be a part of the sample with probability $ps/N$, where we refer $s$ as the \textbf{\textit{sampling ratio}}.
The above sampling method simplifies the analysis, since sampling from disjoint intervals are independent. Note that the notion of sampling ratio is different from the oversampling ratio of sample sort and one round HSS since the size of the overall sample collected from all processors with the above method is $(ps|\gamma|/N)$ in expectation.


\begin{table}
\centering
\footnotesize
\sf
\begin{tabular}{@{}l|ll@{}}
\toprule
\multirow{5}{*}{\begin{turn}{90}\shortstack{{problem}}\end{turn}} 
& $N$ & number of keys to sort in total \\
& $p$ & number of processors sorting keys \\
& $A(i)$ & the $i$th input key \\
& $I(r)$ & key with rank $r$ in the overall global order\\
& $\mathcal{T}_{i}$ & $[\lfrac{Ni}{p}{-}\lfrac{N\epsilon}{p}, \lfrac{Ni}{p}{+}\lfrac{N\epsilon}{p}]$ is the target range for the $i$th splitter \\

                   \midrule
\multirow{6}{*}{\begin{turn}{90}\shortstack{algorithm}\end{turn}} 
& $s_j$ & the sampling ratio for the $j$th round, in particular \\
& & each key in $\gj$ is in the $j$th sample with probability $s_jp/N$ \\
& $L_{j}(i)$ & rank of largest sample key below rank $Ni/p$ after $j$ rounds \\ 
& $U_{j}(i)$ & rank of smallest sample key above rank $Ni/p$ after $j$ rounds \\ 
& $\mathcal{I}_{j}(i)$ $ =$ & $[I(L_{j}(i)), I(U_{j}(i))]$ is the $i$th splitter interval after $j$ rounds \\
& $\gj$ & the union of all splitter intervals after $j$ rounds \\
\bottomrule
\end{tabular}
\caption{Notation used in paper, index $j$ refers to Histogramming round, while $i$ is the processor index.}
\label{tab:notation}
\vskip -0.65 cm
\end{table}

\subsubsection{HSS with k histogramming rounds: Algorithm} \label{k rounds hss: algorithm}

\begin{enumerate}[leftmargin=*]
\item In the sampling phase before the first histogramming round, each input key is picked in the sample with probability $(ps_{1}/N)$, where $s_{1}$ is the sampling ratio for the first round. Samples  are collected at a central processor and broadcast as probes for the first histogramming round.

\item Every processor counts the number of keys in each range defined by the probe keys (the overall sample for the current round), thus, computing a local histogram. All local histograms are summed up using a global reduction and sent to the central processor. 

\item 
For each splitter $i$, the central processor maintains $L_{j}(i)$: the lower bound for the $i^{th}$ splitter rank after $j$ histogramming rounds, i.e. rank of largest key seen so far, which is ranked less than $Ni/p$. Likewise it maintains $U_{j}(i)$, rank of smallest key ranked greater than $Ni/p$. Once the histogram reduction results of the $j^{th}$ round are received, the central processor updates $L_{j}(i)$ and $U_{j}(i)$ and broadcasts the sample keys  $I(L_{j}(i))$, $I(U_{j}(i))$ bounding each splitter by the \textit{splitter interval} $\mathcal{I}_j(i) = [I(L_j(i)),I(R_j(i))]$. 

\item Once every processor is aware of the new splitter intervals, it begins its sampling phase for the $(j+1)^{th}$ round. 
Every key which falls in one of the splitter intervals is picked in the sample with probability $(ps_{j+1}/N)$, where $s_{j}$ denotes the sampling ratio for the $j^{th}$ round. If $j < k$, samples from all processors are collected at a central processor and broadcast for the next round of histogramming, in which case the algorithm loops back to step 2. If $j = k$, the histogramming phase is complete and the algorithm continues to step 5. Step 2, 3 and 4 can be executed efficiently if the local data is already sorted.

\item Once the histogramming phase finishes, the key ranked closest to $Ni/p$ among the keys seen so far is set as the $i^{th}$ splitter. Later, we discuss how to choose $k$ and the sampling ratios $s_{j}$'s so that the splitters determined this way result in a globally balanced partition.
\end{enumerate}

The critical difference between HykSort and HSS is in the sampling method. HykSort samples equally from all splitter intervals whereas HSS samples in proportion to the interval length. By sampling more from larger intervals, HSS is able to narrow down the intervals quicker. We show that HykSort requires at least $\Omega \big(\log (p)\big/\log^{2}\log (p)\big)$ times more samples than HSS in the worst case (see Appendix~\ref{hyksort analysis}). 

A crucial observation is that the splitter intervals shrink as the algorithm progresses and hence the sampling step is executed with a subset of the input that gets smaller every round. Let ${\gj}$ denote the set of keys in the input that belong to one of the splitter intervals after $\jthRound$ rounds. $|{\gj}|$ represents the size of the input that the algorithm samples from, for the $\jthRound^{th}$ round.
We have, $|{\gj}| \leqslant \sum_{i}U_j(i)- L_j(i)$, where $U_j(i)-L_j(i)$ is the number of keys in the $i$th splitter interval. Some splitter intervals can overlap, hence the inequality
. In fact, it is easy to see that there is no partial overlap between two splitter intervals, that is, either two splitter intervals: $\mathcal{I}_{\jthRound}(i_{1})$ and $\mathcal{I}_{\jthRound}(i_{2})$ are disjoint or they are identical.

Our proof outline is as follows.
First we show in Lemma~\ref{lemma_k_ROUNDS_stop} that the algorithm will achieve a good splitting w.h.p. if the sampling ratio for the final round (the $k^{th}$ round) is chosen to be large enough.  
Having shown that the algorithm terminates after $k$ rounds, achieving a globally load balanced partition, we bound the sample sizes in each round by first bounding $|\gj|$ in terms of the sampling ratio $s_{j}$ necessary to obtain all splitters w.h.p..
Finally, we appropriately set the sampling ratios such that the size of the union of splitter intervals, that is, $|\gj|$ decreases by a constant factor. Intuitively, sampling ratio $s_{j}$ in round $j$ (where samples are chosen only from the splitter intervals in round $j$) can be thought of as choosing samples from the entire input range with an oversampling ratio of $s_{j}$ and discarding unnecessary samples. 

%





\begin{lemma} \label{lemma_k_ROUNDS_stop}
If $s_{k} = \frac{2\ln p}{\epsilon}$ be the sampling ratio for the $k^{th}$ round, then at least one key is chosen from each $\mathcal{T}_{i}$ after $k$ rounds w.h.p..
\end{lemma}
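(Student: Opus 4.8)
The plan is to mirror the one-round argument of Lemma~\ref{one round lemma}, adding one structural observation that guarantees the round-$k$ sample can actually ``see'' every key of $\mathcal{T}_i$. First I would fix a splitter index $i$ and condition on the whole execution history through round $k-1$ (all samples drawn, hence the splitter interval $\mathcal{I}_{k-1}(i)$ and the set $\gamma_{k-1}$ from which round $k$ samples). Then I split into two cases. If some sampled key has rank in $\mathcal{T}_i$ during rounds $1,\dots,k-1$, we are already done for this $i$, since that key is among the keys ``seen so far.'' Otherwise no sampled key has rank in $\mathcal{T}_i$; in particular none has rank in $[Ni/p-N\epsilon/p,\,Ni/p]$ nor in $[Ni/p,\,Ni/p+N\epsilon/p]$, so the largest sampled rank below $Ni/p$ satisfies $L_{k-1}(i) < Ni/p-N\epsilon/p$ and the smallest sampled rank above it satisfies $U_{k-1}(i) > Ni/p+N\epsilon/p$. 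Hence $\mathcal{T}_i \subseteq [L_{k-1}(i),U_{k-1}(i)]$, i.e. every key with rank in $\mathcal{T}_i$ lies in $\mathcal{I}_{k-1}(i) \subseteq \gamma_{k-1}$, the subset of the input sampled in round $k$.

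Given this containment, the round-$k$ sampling step picks each of the (at least $N\epsilon/p$) keys with rank in $\mathcal{T}_i$ independently with probability $ps_k/N = 2p\ln p/(\epsilon N)$, so the probability that none of them is picked in round $k$ is at most
\[
\Big(1 - \frac{2p\ln p}{\epsilon N}\Big)^{N\epsilon/p} \le e^{-2\ln p} = \frac{1}{p^2},
\]
exactly as in Lemma~\ref{one round lemma} (one may think of round $k$ as ordinary sampling with oversampling ratio $s_k$ over the sub-domain $\gamma_{k-1}$, discarding samples outside it). Taking expectation over the conditioning history, the probability that splitter $i$ has no sampled key with rank in $\mathcal{T}_i$ after round $k$ is at most $p^{-2}$, and a union bound over the $p-1$ splitters gives failure probability at most $(p-1)p^{-2} < 1/p$, which is the claimed w.h.p.\ bound.

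The only genuinely new ingredient beyond Lemma~\ref{one round lemma}, and the step I expect to require the most care, is the containment $\mathcal{T}_i \subseteq \gamma_{k-1}$ in the ``not yet found'' case: one must argue that the target range cannot be excised from the sampled region without a successful key being drawn first, which holds because $L_j(i)$ and $U_j(i)$ are by definition the sampled ranks immediately bracketing $Ni/p$, so any sampled key with rank in $\mathcal{T}_i$ would itself be recorded as $L_j(i)$ or $U_j(i)$. Some care is also needed with the conditioning: the round-$k$ draws are independent of the history given $\gamma_{k-1}$, so the $p^{-2}$ bound holds conditionally and hence after averaging, which is what licenses the final union bound.
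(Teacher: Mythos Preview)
Your proposal is correct and follows essentially the same approach as the paper, which simply invokes Lemma~\ref{one round lemma} without further detail. In fact you are more careful than the paper: you spell out the containment $\mathcal{T}_i \subseteq \mathcal{I}_{k-1}(i) \subseteq \gamma_{k-1}$ in the ``not yet found'' case and handle the conditioning explicitly, whereas the paper leaves these points implicit.
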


Given a sampling ratio of $s_j=\frac{2p\ln p}{\epsilon N}$, all splitters are found w.h.p, by Lemma~\ref{one round lemma}.
We next bound the expectation of the size of the union of all splitter intervals. 
\begin{lemma} \label{lemma_k_ROUNDS_sample_length} 
Let $s_{\jthRound}$ be the sampling ratio for the $\jthRound^{th}$ round, $\mathcal{I}_{\jthRound}(i)$ be the splitter interval for the $i^{th}$ splitter after $\jthRound$ rounds  and ${\gj}$ denote the set of input keys that lie in one of the $\mathcal{I}_{\jthRound}$'s, then, 
$E(|{\gj}|) \leqslant \frac{2N}{s_{\jthRound}}
$.
\end{lemma}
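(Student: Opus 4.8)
The plan is to bound $E(|\gamma_j|)$ by linearity of expectation over the $p-1$ splitter intervals, estimating the expected length of each interval via a geometric waiting-time argument. The key simplification is that, since only an upper bound on an expectation is required, the correlations between different splitter intervals --- precisely what makes the AMS-sort scanning analysis delicate --- play no role here.

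First I would invoke the inequality $|\gamma_j| \le \sum_{i=1}^{p-1}\bigl(U_j(i)-L_j(i)\bigr)$ noted in the text (valid because any two splitter intervals are either disjoint or identical), reducing the claim to showing $E\bigl(U_j(i)-L_j(i)\bigr) \le 2N/(ps_j)$ for each fixed $i$. To bound this, I would condition on the interval $\mathcal{I}_{j-1}(i) = [I(a),I(b)]$ entering round $j$, with $a = L_{j-1}(i)$ and $b = U_{j-1}(i)$, so that $a < Ni/p < b$ (if the $i$th splitter has already been pinned down, the interval is a single point and contributes nothing). Within round $j$, every key of rank in $(a,b)$ is included in the sample independently with probability $q := ps_j/N$, and $L_j(i)$ (resp.\ $U_j(i)$) is the rank of the nearest such sampled key below (resp.\ above) $Ni/p$, or $a$ (resp.\ $b$) if no sample falls on that side. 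Hence $\lfloor Ni/p\rfloor - L_j(i)$ is stochastically dominated by a geometric random variable with success probability $q$, so its expectation is at most $1/q$; the same bound holds for $U_j(i) - \lfloor Ni/p\rfloor$, and adding the two gives $E\bigl(U_j(i)-L_j(i)\bigr) \le 2/q = 2N/(ps_j)$. Summing over the $p-1$ splitters then yields $E(|\gamma_j|) \le (p-1)\cdot 2N/(ps_j) < 2N/s_j$.

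The one step that needs genuine care is justifying that sampling in round $j$ from the restricted set $\gamma_{j-1}$ --- rather than from the whole input --- does not break the per-splitter argument: I would argue that a round-$j$ sample falling in a splitter interval disjoint from $\mathcal{I}_{j-1}(i)$ has rank entirely below $a$ or entirely above $b$, hence cannot be the sample closest to $Ni/p$ on either side, while samples in an interval identical to $\mathcal{I}_{j-1}(i)$ are exactly the ones already accounted for; in every case $a$ and $b$ remain valid lower and upper bounds, so the capped geometric estimate is only an over-estimate. The base case $j=1$, where $\gamma_0$ is all of $A$ and every $\mathcal{I}_0(i)$ is the full key range, is handled by the identical reasoning. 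Everything else is a routine computation with the geometric distribution, so I expect this to be among the easier lemmas in the paper.
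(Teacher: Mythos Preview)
Your proposal is correct and follows essentially the same route as the paper: bound $E[U_j(i)-Ni/p]$ and $E[Ni/p-L_j(i)]$ each by $N/(ps_j)$ via the geometric tail-sum $\sum_{x\ge 0}(1-ps_j/N)^x$, then sum over the $p-1$ splitters using $|\gamma_j|\le\sum_i(U_j(i)-L_j(i))$. Your added paragraph justifying why sampling only from $\gamma_{j-1}$ does not spoil the per-splitter geometric bound is a point the paper leaves implicit, so your write-up is if anything slightly more careful.
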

\begin{proof} Since $L_{\jthRound}(i)$ and $U_{\jthRound}(i)$ are only improved every round,
 \[L_{\jthRound-1}(i) \leqslant L_{\jthRound}(i) \leqslant \frac{Ni}{p} \leqslant U_{\jthRound}(i) \leqslant U_{\jthRound-1}(i)\]
Further  
$\forall x: 0\leqslant x \leqslant \big( U_{\jthRound-1}(i)-\frac{Ni}{p}\big)$,
\begin{align*}
& P\Big[U_{\jthRound}(i) - \frac{Ni}{p} \geqslant x\Big] = \big(1 - \frac{ps_{\jthRound}}{N}\big)^{x}
\end{align*}
As a result, we can bound the size of the $i$th splitter interval,
\begin{align*}
E\Big[U_{\jthRound}(i) -  \frac{Ni}{p}\Big] 
=& \sum_{x=1}^{U_{\jthRound-1}(i)-\frac{Ni}{p}} P\Big[U_{\jthRound}(i) - \frac{Ni}{p} \geqslant x\Big] \quad \quad \quad \quad\quad \quad \\
=& \sum_{x=1}^{U_{\jthRound-1}(i)-\frac{Ni}{p}}\big(1 - \frac{ps_{\jthRound}}{N}\big)^{x} \\
\leqslant & \sum_{x=0}^{\infty} \big(1 - \frac{ps_{\jthRound}}{N}\big)^{x}  
= \frac{N}{ps_{\jthRound}}
\end{align*}
By a similar argument we have that,
$E\Big[\frac{Ni}{p} - L_{\jthRound}(i)\Big] \leqslant \frac{N}{ps_{\jthRound}}$, 
\begin{align*}
 \text{Thus,} \quad E[|{\gj}|] & \leqslant \ E\Big[\sum_{i=1}^{p-1}|\mathcal{I}_{\jthRound}(i)\cap A|\Big] = \sum_{i} E\Big[U_{\jthRound}(i)-L_{\jthRound}(i)\Big] \\
&=  \sum_{i} E\Big[\frac{Ni}{p} - L_{\jthRound}(i)\Big] + E\Big[U_{\jthRound}(i) - \frac{Ni}{p}\Big] \\
&\leqslant \sum_{i} \frac{2N}{ps_{\jthRound}} = \frac{2N}{s_{\jthRound}}
\end{align*} 
\vskip -0.64 cm
\end{proof}
%
\noindent Lemma~\ref{lemma_k_ROUNDS_sample_length} suggests that ${\gj}$ will be small in expectation. The next lemma shows that it is also small w.h.p..
\begin{lemma}\label{lemma_K_ROUNDS_BOUND}
If  $s_{\jthRound}<\sqrt{\frac{2p}{\ln p}}$, then, $|{\gj}| 
\leqslant \frac{4N}{s_{\jthRound}} \ \text{w.h.p.} 
$
\end{lemma}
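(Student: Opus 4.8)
The plan is to prove the bound conditionally on the entire history of the first $\jthRound-1$ rounds; since it will hold for every such history, it holds unconditionally. So fix the splitter intervals $\mathcal{I}_{\jthRound-1}(i)$ produced by the earlier rounds (equivalently, fix $\gamma_{\jthRound-1}$ and all the $L_{\jthRound-1}(i),U_{\jthRound-1}(i)$). Conditioned on this, the $\jthRound$th sample retains each key of $\gamma_{\jthRound-1}$ independently with probability $q:=ps_{\jthRound}/N$, and, reusing the tail computation from the proof of Lemma~\ref{lemma_k_ROUNDS_sample_length}, $\Pr[\,U_{\jthRound}(i)-Ni/p\ge x\,]=(1-q)^{x}$ for $x\le U_{\jthRound-1}(i)-Ni/p$, with the symmetric statement for $Ni/p-L_{\jthRound}(i)$.

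First I would rewrite $|\gj|$ as a sum of \emph{truncated} geometric variables supported on disjoint pieces of rank space. Partition the ranks into the $p$ cells $J_i:=[Ni/p,\,N(i{+}1)/p)$. A rank $r\in J_i$ can lie in a round-$\jthRound$ splitter interval only via one of the two targets bounding $J_i$ (a more distant target would force the corresponding interval to have merged with $\mathcal{I}_{\jthRound}(i)$ or $\mathcal{I}_{\jthRound}(i{+}1)$ anyway), i.e.\ only if no sample key has rank strictly between $Ni/p$ and $r$, or none strictly between $r$ and $N(i{+}1)/p$. Hence
\[
  |\gj|\;\le\;\sum_{i=1}^{p-1}\bigl(G_i^{+}+G_i^{-}\bigr)+O(p),
  \quad\text{where}\quad
  G_i^{+}:=\min\!\Bigl(U_{\jthRound}(i)-\tfrac{Ni}{p},\,\tfrac{N}{p}\Bigr),\ \
  G_i^{-}:=\min\!\Bigl(\tfrac{Ni}{p}-L_{\jthRound}(i),\,\tfrac{N}{p}\Bigr).
\]
The truncation at $N/p$ is the whole point: if the cell to the right of $Ni/p$ is sampled empty then $G_i^{+}=N/p$ regardless of what lies farther right, and otherwise $U_{\jthRound}(i)$ already sits inside that cell, so $G_i^{+}$ is a measurable function of the round-$\jthRound$ samples in a single cell; likewise $G_i^{-}$. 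Moreover $\Pr[G_i^{\pm}\ge x]\le(1-q)^{x}$, so each is stochastically dominated by a geometric variable of parameter $q$ with mean $O\bigl(N/(ps_{\jthRound})\bigr)$.

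Because the cells $\bigl(Ni/p,N(i{+}1)/p\bigr]$ are pairwise disjoint over all $i$, and likewise the cells $\bigl[N(i{-}1)/p,Ni/p\bigr)$, the families $\{G_i^{+}\}_i$ and $\{G_i^{-}\}_i$ are each (conditionally) mutually independent. Thus $\sum_i G_i^{+}$ and $\sum_i G_i^{-}$ are each a sum of at most $p-1$ independent variables dominated by geometrics, of expectation at most $N/s_{\jthRound}$. A standard Chernoff bound for sums of independent geometric variables then shows each of these two sums stays below $(2{-}\delta)N/s_{\jthRound}$ (any fixed $\delta>0$) except with probability $e^{-\Omega(p)}$; a union bound over the two events, with the $O(p)$ boundary term absorbed, yields $|\gj|\le 4N/s_{\jthRound}$ w.h.p. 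The hypothesis $s_{\jthRound}<\sqrt{2p/\ln p}$ keeps this in the right regime: together with $N\ge p$ it forces $N/s_{\jthRound}=\Omega(\sqrt{p\ln p})$, which is (more than) polylogarithmically large in $p$, so that the target value dominates the lower‑order boundary contribution and the tail estimate is comfortably inverse‑polynomial in $p$.

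The step I expect to be the main obstacle is exactly the dependence among splitter intervals: several consecutive $\mathcal{I}_{\jthRound}(i)$ can be identical, so the naive lengths $U_{\jthRound}(i)-L_{\jthRound}(i)$ are positively correlated and a Chernoff bound cannot be applied to their sum directly. The truncation‑at‑$N/p$ device is precisely what defuses this --- a merged interval of length $\ell$ merely forces about $\ell p/N$ whole cells to be counted, each at weight $N/p$, so the truncated sum still dominates $|\gj|$ while becoming a sum over disjoint cells amenable to the independence argument. The remaining points to verify are purely bookkeeping: that the $O(p)$ boundary contribution (the interval endpoints) is lower order, and that truncation causes no undercounting. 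If a cleaner route is preferred, a second‑moment argument also suffices, since once the rare long merged intervals are ruled out one gets $\mathrm{Var}(|\gj|)=O\bigl(N^{2}/(ps_{\jthRound}^{2})\bigr)$, whence Chebyshev's inequality gives a failure probability of $O(1/p)$.
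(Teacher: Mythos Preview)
Your approach coincides with the paper's at the crucial step: your truncated half-widths $G_i^{\pm}$ are exactly the paper's $U'_{\jthRound}(i)-Ni/p$ and $Ni/p-L'_{\jthRound}(i)$, and the independence you derive from the disjoint cells is the same observation the paper makes. The only real difference is in the concentration tool. The paper applies Hoeffding's inequality to the bounded variables $G_i^{+}\in[0,N/p]$, obtaining the tail $e^{-2p/s_{\jthRound}^{2}}$; this is precisely where the hypothesis $s_{\jthRound}<\sqrt{2p/\ln p}$ enters, to force that tail below $1/p^{2}$. Your Chernoff bound for sums of (stochastically dominated) geometrics gives the sharper tail $e^{-\Omega(p)}$, so your argument does not actually need the hypothesis at that point---you invoke it only to absorb your $O(p)$ boundary term. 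That term is spurious, however: truncation preserves the union $\gamma_{\jthRound}$ exactly (the paper verifies this), so $|\gamma_{\jthRound}|\le\sum_i(G_i^{+}+G_i^{-})$ with no additive error; and your absorption argument is off anyway, since $N/s_{\jthRound}=\Omega(\sqrt{p\ln p})$ does not dominate $O(p)$. Simply dropping the term repairs both issues.
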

\ifshowproofs
\begin{proof} 
\label{THEOREM_K_ROUNDS_BOUND_PROOF}
The main challenge in proving the above theorem is in handling the dependency in splitter intervals, for e.g. when they overlap.
We first modify the definition of splitter intervals in the following way, so that the union of the splitter intervals remains unchanged. 
\begin{align*}
U'_{\jthRound}(i) = \min\Big(\frac{N(i+1)}{p}, U_{\jthRound}(i)\Big),\ L'_{\jthRound}(i) = \max\Big(\frac{N(i-1)}{p}, L_{\jthRound}(i)\Big)
\end{align*}
The above definition effectively strips the splitter interval 
\\ $[I(L_{j}(i)), I(U_{j}(i))]$ to $[I(L'_{j}(i)), I(U'_{j}(i))]$. 
To see that stripping doesn't change the union of all splitter intervals, consider a $U_{\jthRound}(i)$ which is greater than $N(i+1)/p$. Then by definition, we have $U_{\jthRound}(i) = U_{\jthRound}(i+1)$. Thus, the portion of $\mathcal{I}_{j}(i)$ that extends beyond $N(i+1)/p$ is included in $\mathcal{I}_{j}(i+1)$. Hence, restricting $U_{\jthRound}(i)$ to $Ni/p + N/p$ doesn't change $\gamma_{j}$ - the union of $\mathcal{I}_{j}$'s. An inductive argument (by considering splitter intervals from left to right) shows that restricting all $U_{\jthRound}$'s doesn't change $\gamma_{j}$. A similar argument can be used for $L_{\jthRound}$'s.

Observe that, $U'_{\jthRound}(i)$'s are independent random variables. This is because the possible values of $U'_{\jthRound}(i_{1})$ and $U'_{\jthRound}(i_{2})$ for $i_{1} \neq  i_{2}$ are completely disjoint. 
The value of $U'_{\jthRound}(i)$ is determined completely by sampling in the interval $[Ni/p, N(i+1)/p)$ and since sampling in disjoint intervals are independent, $U'_{\jthRound}(i)$'s are independent. 


\noindent We have, \(E[U'_{\jthRound}(i)-Ni/p] \leq E[U_{\jthRound}(i)-Ni/p] \leq N/ps_{j}\). 
\begin{align*}
\text{Thus, } P  \Big[\sum_{i}  & U'_{\jthRound}(i) - \frac{Ni}{p}   > \frac{2N}{s_{j}}\Big]  \quad \quad \quad \quad  \quad \quad \\
&= P\Big[\sum_{i} U'_{\jthRound}(i) - \frac{Ni}{p} - \frac{N}{ps_{j}} > \frac{N}{s_{j}}\Big] \\
&\leq P\Big[\sum_{i} U'_{\jthRound}(i) - \frac{Ni}{p} - E\big[U'_{\jthRound}(i) - \frac{Ni}{p}\big] > \frac{N}{s_{j}}\Big] \\
& \leq e^{-\frac{2N^{2}}{s_{j}^{2}}\big/\sum_{i} (N/p)^{2}} = e^{-\frac{2p}{s_{i}^{2}}} \leq e^{-2\ln p} \\ 
&\leq \frac{1}{p^{2}} \text{  (using Hoeffding's inequality)}
\end{align*}

Note that $U'_{\jthRound}(i)-Ni/p$ lies strictly in the interval $[0,N/p]$, this fact is used in the application of the Hoeffding's inequality. On similar lines we have,
$\sum_{i} (Ni/p - L'_{\jthRound}(i)) \leqslant \frac{2N}{s_{\jthRound}},\ w.h.p. $.
We then conclude,
\begin{align*}
|{\gj}| &\leqslant  \sum_{i} \Big(U'_{\jthRound}(i) - \frac{Ni}{p}\Big) + \Big(\frac{Ni}{p} - L'_{\jthRound}(i)\Big) \leqslant \frac{4N}{s_{\jthRound}} \quad  w.h.p.
\end{align*}
\vskip -0.7 cm
\end{proof}
\else
We omit the proof for Lemma~\ref{lemma_K_ROUNDS_BOUND} for brevity.
\fi
The next lemma bounds the sample size for each round in terms of the sampling ratios.

\begin{lemma} \label{lemma_k_rounds_SampleSize}
Let $Z_{\jthRound}$ be the sample size for the $\jthRound^{th}$ round and $s_{\jthRound} \geqslant s_{\jthRound-1}$, then $Z_{\jthRound} \leqslant (5ps_{\jthRound}/s_{\jthRound-1})$ w.h.p.
\end{lemma}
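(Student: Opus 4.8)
The plan is to view $Z_{\jthRound}$, conditioned on the outcome of the first $\jthRound-1$ rounds, as a sum of independent Bernoulli indicators and to apply a Chernoff bound, using Lemma~\ref{lemma_K_ROUNDS_BOUND} to control its mean. Concretely, in round $\jthRound$ each key lying in one of the splitter intervals $\mathcal{I}_{\jthRound-1}$ (that is, each key of $\gamma_{\jthRound-1}$) is included in the round-$\jthRound$ sample independently with probability $ps_{\jthRound}/N$. Hence, conditioned on the execution history through round $\jthRound-1$ (which determines $\gamma_{\jthRound-1}$), we have $Z_{\jthRound}=\sum_{x\in\gamma_{\jthRound-1}}B_x$ with the $B_x$ i.i.d.\ $\mathrm{Bernoulli}(ps_{\jthRound}/N)$, so that $E[Z_{\jthRound}\mid\gamma_{\jthRound-1}]=ps_{\jthRound}|\gamma_{\jthRound-1}|/N$.

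First I would invoke Lemma~\ref{lemma_K_ROUNDS_BOUND} (applied to round $\jthRound-1$) to obtain that the event $\mathcal{E}:\ |\gamma_{\jthRound-1}|\le 4N/s_{\jthRound-1}$ holds w.h.p.; on $\mathcal{E}$ the conditional mean satisfies $E[Z_{\jthRound}\mid\gamma_{\jthRound-1}]\le 4ps_{\jthRound}/s_{\jthRound-1}=:\mu_H$. Next, for any fixed realization of the history lying in $\mathcal{E}$, I would apply the multiplicative Chernoff upper-tail bound with deviation $\delta=1/4$ against the mean upper bound $\mu_H$: since $(1+\delta)\mu_H=5ps_{\jthRound}/s_{\jthRound-1}$,
\[
P\Big[Z_{\jthRound}\ge \tfrac{5ps_{\jthRound}}{s_{\jthRound-1}}\ \Big|\ \text{history}\Big]\le \exp\!\Big(-\tfrac{\delta^2\mu_H}{3}\Big)=\exp\!\Big(-\tfrac{p\,s_{\jthRound}}{12\,s_{\jthRound-1}}\Big)\le e^{-p/12},
\]
where the last inequality uses the hypothesis $s_{\jthRound}\ge s_{\jthRound-1}$. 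Since this bound holds uniformly over all histories in $\mathcal{E}$, a union bound with the failure probability of $\mathcal{E}$ gives $P[Z_{\jthRound}>5ps_{\jthRound}/s_{\jthRound-1}]\le e^{-p/12}+P[\mathcal{E}^{c}]$, which is $O(p^{-c})$ for a fixed $c>0$ by Lemma~\ref{lemma_K_ROUNDS_BOUND}; hence the claimed bound holds w.h.p.

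The step I expect to require the most care is the conditioning argument: the bound $|\gamma_{\jthRound-1}|\le 4N/s_{\jthRound-1}$ is only high-probability, not almost sure, so one must condition on the full execution through round $\jthRound-1$, apply the conditional Chernoff bound uniformly over every ``good'' history (which is legitimate precisely because the round-$\jthRound$ inclusion coins are drawn fresh, independently of that history), and only then fold in $P[\mathcal{E}^{c}]$. Two minor points also deserve a remark: the edge case $\jthRound=1$, where $\gamma_0$ is the whole input ($|\gamma_0|=N$) and one simply applies Chernoff to a sum with mean $ps_1$ (equivalently, one may read $s_0=1$); and that the statement implicitly inherits the hypothesis $s_{\jthRound-1}<\sqrt{2p/\ln p}$ of Lemma~\ref{lemma_K_ROUNDS_BOUND}, which holds throughout the regime in which the algorithm is run.
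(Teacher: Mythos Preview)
Your proposal is correct and follows essentially the same approach as the paper: condition on $|\gamma_{\jthRound-1}|\le 4N/s_{\jthRound-1}$ via Lemma~\ref{lemma_K_ROUNDS_BOUND}, then apply a Chernoff bound to the fresh round-$\jthRound$ Bernoulli indicators to obtain the $e^{-p/12}$ tail. If anything, you are more explicit than the paper about the conditioning/union-bound step and the $\jthRound=1$ edge case, but the argument is the same.
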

\begin{proof} We have, 
$
E[Z_{\jthRound}] = |\gamma_{\jthRound-1}| ps_{\jthRound}/N
$. We also have,
$|\gamma_{\jthRound-1}| \leqslant 4N/s_{\jthRound-1} \ w.h.p.$, using Lemma~\ref{lemma_K_ROUNDS_BOUND}.

Given that $|\gamma_{\jthRound-1}| \leqslant 4N/s_{\jthRound-1}$, using Chernoff bounds, 
\begin{align*}
 P[Z_{\jthRound} \geqslant (5ps_{\jthRound}/s_{\jthRound-1})] &\leqslant
P[Z_{\jthRound} \geqslant E[Z_{\jthRound}] + ps_{\jthRound}/s_{\jthRound-1}] \\
&\leqslant e^{-\frac{(ps_{\jthRound}/s_{\jthRound-1})^{2}}{3E[Z_{\jthRound}]}} 
= e^{-\frac{(ps_{\jthRound}/s_{\jthRound-1})^{2}N}{3|\gamma_{\jthRound-1}| ps_{\jthRound}}} \\
&\leqslant e^{-\frac{(ps_{\jthRound}/s_{\jthRound-1})^{2}Ns_{\jthRound-1}}{12N ps_{\jthRound}}}
\leqslant e^{-\frac{p}{12}}
\end{align*} 
%
\vskip -0.62 cm
\end{proof}

With Lemmas \ref{lemma_k_rounds_SampleSize} and \ref{lemma_k_ROUNDS_stop} in hand, we are now prepared to discuss how to appropriately choose the sampling ratios so that our algorithm achieves the desired load balance. \\

For HSS with $k$ rounds, if we set the sampling ratio for the $j^{th}$ round as $s_{j} = (2\ln p/\epsilon)^{j/k}$, then after $k$ rounds all splitters are found w.h.p., using Lemma~\ref{lemma_k_ROUNDS_stop}. The size of the union of splitter intervals, that is, $|\gj|$ is less than $4N/s_{j} = 4N (\epsilon/2\ln p)^{1/k}$ using Lemma~\ref{lemma_K_ROUNDS_BOUND}. The sample size for the $j^{th}$ histogramming round is at most $5ps_{j}/s_{j-1} = 5p(2\ln p/\epsilon)^{1/k}$ using Lemma~\ref{lemma_k_rounds_SampleSize}. This gives us our main theorem.

\begin{table*}
\centering
\vskip -0.22 cm
\resizebox{0.76\textwidth}{!}
{%
\begin{tabular}
{|M{3.61 cm}|M{2.3 cm}|M{3.2 cm}|M{2.3 cm}|M{1.5 cm}|} \hline
Algorithm & Overall sample size  & Computation complexity & Communication complexity & Supersteps\\ \hline

Regular sampling  & $\mathcal{O}(\frac{p^{2}}{\epsilon})$ & $\mathcal{O}\Big(\frac{p^{2}}{\epsilon}\log p\log p\Big)$  & $\mathcal{O}\Big(\frac{p^{2}}{\epsilon} \Big)$ & $O(1)$ \\ \hline

Random sampling  & $\mathcal{O}(\frac{p\log N}{\epsilon^{2}})$  &  $\mathcal{O}\Big(\frac{p\log N\log p}{\epsilon^{2}}\Big)$ & $\mathcal{O}\Big(\frac{p\log N}{\epsilon^{2}} \Big)$ & $O(1)$\\ \hline

Single stage AMS sort &  $\mathcal{O}(p(\log p+ \frac{1}{\epsilon}))$   &  $\mathcal{O}\Big(p(\log p+ \frac{1}{\epsilon})\log N\Big)$ & $\mathcal{O}\Big(p(\log p+\frac{1}{\epsilon}) \Big)$ & $O(1)$\\ \hline


HSS with one round &  $\mathcal{O}(\frac{p\log p}{\epsilon})$   &   $\mathcal{O}\Big( \frac{p\log p}{\epsilon}\log N\Big)$ & $\mathcal{O}\Big(\frac{p\log p}{\epsilon}  \Big)$ & $O(1)$\\ \hline

HSS with two rounds &  $\mathcal{O}(p\sqrt{\frac{\log p}{\epsilon}})$  &   $\mathcal{O}\Big( p\sqrt{\frac{\log p}{\epsilon}}\log N\Big)$ & $\mathcal{O}\Big(p\sqrt{\frac{\log p}{\epsilon}} \Big)$ & $O(1)$\\ \hline

HSS with $k$ rounds & $\mathcal{O}(kp\sqrt[k]{\frac{\log p}{\epsilon}})$  & $\mathcal{O}\Big( kp\sqrt[k]{\frac{\log p}{\epsilon}}\log N\Big)$ & $\mathcal{O}\Big(kp\sqrt[k]{\frac{\log p}{\epsilon}}  \Big)$ & $O(k)$\\ \hline

HSS with $O(1)$ samples per processor per round 
& $\mathcal{O}(p\log \frac{\log p}{\epsilon})$ &  $\mathcal{O}\Big( p\log \frac{\log p}{\epsilon}\log N \Big)$ & $\mathcal{O}\Big(p\log \frac{\log p}{\epsilon}\Big)$ & $O(\log \frac{\log p}{\epsilon})$\\ \hline
\end{tabular}}
\caption{Cost complexity of data partitioning step of Sample sort, AMS sort, and HSS. Data exchange costs are excluded.}
\label{running times table}
\end{table*} 
\begin{table*}
\centering
\vskip -0.7 cm
\resizebox{0.76\textwidth}{!}{%
\begin{tabular}
{|M{2.61 cm}|M{2.1 cm}|M{4.8 cm}|M{4.0 cm}|M{1.7 cm}|} \hline
Algorithm & Sample size per stage  & Computation complexity & Communication complexity & Supersteps\\ \hline
AMS sort, $l$ stages & $\mathcal{O}(r(\log r + \frac{1}{\epsilon}))$ & $\mathcal{O}\Big(\frac{N}{p}\log N + lr(\log r + \frac{1}{\epsilon})\log N \Big)$ & $\mathcal{O}\Big( lr(\log r + \frac{1}{\epsilon}) + \frac{lN}{p}\Big)$ & $O(l)$ \\ \hline

HSS, $l$ stages $\mathcal{O}(\log \frac{\log r}{\epsilon})$ rounds per stage & $\mathcal{O}(r\log \frac{\log r}{\epsilon})$  & $\mathcal{O}\Big(\frac{N}{p}\log N + lr \log (\frac{\log r}{\epsilon})\log N\Big)$ & $\mathcal{O}\Big(lr\log {\frac{\log(r)}\epsilon} + \frac{lN}{p}\Big)$ & $O(l\log{\frac{\log{p}}{\epsilon}})$ \\ \hline
\end{tabular}}

\caption{Cost complexity of $l$-stage HSS and AMS sort; the size of processor group decreases by a factor of $r=p^{1/l}$ each round.}
\label{running times table multistage}
\vskip -0.7 cm
\end{table*}

\begin{theorem} \label{k rounds result}
With $k$ rounds of histogramming and a sample size of $\mathcal{O}\Big(p\sqrt[k]{\frac{\log p}{\epsilon}}\Big)$ per round  , HSS achieves $(1+\epsilon)$ load balance w.h.p. for large enough $p$\footnote{Specifically, so long as $s_j = \frac{2\ln p}{\epsilon} \leq \sqrt{\frac{2p}{\ln p}}$ for Lemma~\ref{lemma_K_ROUNDS_BOUND}}.
\end{theorem}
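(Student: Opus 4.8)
The plan is to instantiate the machinery built in Lemmas~\ref{lemma_k_ROUNDS_stop}, \ref{lemma_K_ROUNDS_BOUND}, and \ref{lemma_k_rounds_SampleSize} with the specific choice of sampling ratios $s_j = (2\ln p/\epsilon)^{j/k}$ for $j = 1,\ldots,k$, and then simply chain the three lemmas together. First I would verify the boundary cases: $s_0 = 1$ (so that the first round samples the entire input with sampling ratio $s_1$, consistent with how $\gamma_0 = A$ is treated), and $s_k = (2\ln p/\epsilon)^{k/k} = 2\ln p/\epsilon$, which is exactly the hypothesis of Lemma~\ref{lemma_k_ROUNDS_stop}. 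Hence after the $k$th histogramming round, with high probability every target range $\mathcal{T}_i$ contains at least one sampled key, so every splitter is found within its target range and the resulting partition is globally (hence locally) $(1+\epsilon)$-balanced.

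Next I would bound the per-round sample size. For each $j$, the ratio $s_j/s_{j-1} = (2\ln p/\epsilon)^{1/k}$ is constant across rounds, and $s_j \geq s_{j-1}$ since $2\ln p/\epsilon \geq 1$. By Lemma~\ref{lemma_k_rounds_SampleSize}, conditioned on the high-probability event of Lemma~\ref{lemma_K_ROUNDS_BOUND} at round $j-1$, the sample size in round $j$ satisfies $Z_j \leq 5p s_j/s_{j-1} = 5p(2\ln p/\epsilon)^{1/k}$ with high probability. Summing the failure probabilities of the $k$ rounds (each at most polynomially small in $p$, e.g.\ $e^{-p/12}$ or $1/p^2$) via a union bound over the $O(k)$ rounds keeps the total failure probability $O(p^{-c})$ provided $k$ is, say, polylogarithmic in $p$; this gives the claimed $\mathcal{O}(p\sqrt[k]{\log p/\epsilon})$ sample size per round simultaneously for all rounds w.h.p. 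The total over all $k$ rounds is then $\mathcal{O}(kp\sqrt[k]{\log p/\epsilon})$, matching Table~\ref{running times table}.

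The one genuine technical point — and the footnote in the statement flags it — is the side condition required by Lemma~\ref{lemma_K_ROUNDS_BOUND}, namely $s_j < \sqrt{2p/\ln p}$ for every $j$. Since the $s_j$ are increasing in $j$, it suffices to check $s_k = 2\ln p/\epsilon < \sqrt{2p/\ln p}$; this holds for $p$ large enough relative to $1/\epsilon$, which is the "large enough $p$" qualifier in the theorem. I would state this explicitly as the hypothesis under which the induction of Lemma~\ref{lemma_K_ROUNDS_BOUND} (and therefore the conditioning in Lemma~\ref{lemma_k_rounds_SampleSize}) is valid in every round. The main obstacle is bookkeeping rather than a new idea: one must carefully propagate the conditioning, since Lemma~\ref{lemma_k_rounds_SampleSize} for round $j$ assumes the bound $|\gamma_{j-1}| \leq 4N/s_{j-1}$ from Lemma~\ref{lemma_K_ROUNDS_BOUND}, so the clean way to present it is an induction on $j$ showing that after round $j$ both "$|\gamma_j| \leq 4N/s_j$" and "$Z_j \leq 5p s_j/s_{j-1}$" hold w.h.p., with the final step invoking Lemma~\ref{lemma_k_ROUNDS_stop} for correctness of the splitters. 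Everything else — the arithmetic that $5p(2\ln p/\epsilon)^{1/k} = \mathcal{O}(p\sqrt[k]{\log p/\epsilon})$ and that the partition is globally balanced once all $\mathcal{T}_i$ are hit — is routine.
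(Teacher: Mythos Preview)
Your proposal is correct and follows exactly the paper's own approach: set $s_j = (2\ln p/\epsilon)^{j/k}$, invoke Lemma~\ref{lemma_k_ROUNDS_stop} at $j=k$ for correctness, and chain Lemmas~\ref{lemma_K_ROUNDS_BOUND} and~\ref{lemma_k_rounds_SampleSize} to get the per-round sample-size bound $5p(2\ln p/\epsilon)^{1/k}$. Your explicit treatment of the $s_0=1$ boundary case, the union bound over rounds, and the inductive conditioning is in fact more careful than the paper's one-paragraph argument, but the strategy is identical.
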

Observe from theorem~\ref{k rounds result} that there is a trade off between the sample size per round ($\mathcal{O}(p\sqrt[k]{\log p/\epsilon})$) of histogramming and the number of histogramming rounds. To minimize the number of samples across all rounds, we take derivative of $(kp\sqrt[k]{\log p/\epsilon})$ w.r.t. $k$ and set it to $0$,

\begin{align*}
& \frac{d(kp\sqrt[k]{\log p/\epsilon})}{dk}  = p\sqrt[k]{\log p/\epsilon}\Big(1  - \frac{\log \frac{\log p}{\epsilon}}{k}\Big) = 0 \\
\Rightarrow \quad & k = \log \frac{\log p}{\epsilon}
\end{align*}
The overall sample size $\mathcal{O}(kp\sqrt[k]{\log p/\epsilon})$ attains global minimum for $k=\log (\log p/\epsilon)$ histogramming rounds and $|\gj| \leq 4N / (e)^{j}$ at the minima using Lemma~\ref{lemma_K_ROUNDS_BOUND}. 
Across all rounds, the overall sample size from all processors is $\mathcal{O}(p\log (\log p/\epsilon))$.  This leads us to the following main theorem.

\begin{theorem} \label{optimal rounds result}
With $k=\mathcal{O}(\log (\log p/\epsilon))$ rounds of histogramming and $ \mathcal{O}(p)$ samples per round ($\mathcal{O}(1)$ from each processor), HSS achieves $(1+\epsilon)$ load balance w.h.p., for large enough $p$.
\end{theorem}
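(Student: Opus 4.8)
The plan is to instantiate Theorem~\ref{k rounds result} with the number of rounds $k$ chosen to minimize the total sample size, and then observe that at this optimum only $\mathcal{O}(p)$ samples per round are required. Nothing new is needed beyond the lemmas already proved; the work is in pinning down $k$ and in carefully aggregating the per-round high-probability guarantees.

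First I would fix the geometric schedule of sampling ratios $s_j = (2\ln p/\epsilon)^{j/k}$, exactly as in the paragraph preceding Theorem~\ref{k rounds result}. This makes $s_k = 2\ln p/\epsilon$, so Lemma~\ref{lemma_k_ROUNDS_stop} guarantees that after $k$ rounds every target range $\mathcal{T}_i$ contains at least one sampled key w.h.p.; hence, by step~5 of the algorithm, every splitter is finalized with rank in $\mathcal{T}_i$ and the resulting partition is globally $(1+\epsilon)$-balanced. The per-round sample size is controlled by Lemma~\ref{lemma_k_rounds_SampleSize}: the schedule is increasing, so $Z_j \leq 5p\,s_j/s_{j-1} = 5p\,(2\ln p/\epsilon)^{1/k}$ w.h.p. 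One must also check the hypothesis $s_j < \sqrt{2p/\ln p}$ that Lemma~\ref{lemma_K_ROUNDS_BOUND} (invoked inside Lemma~\ref{lemma_k_rounds_SampleSize}) needs; since $s_j \leq s_k = 2\ln p/\epsilon$, this holds once $2\ln p/\epsilon \leq \sqrt{2p/\ln p}$, which is precisely the ``large enough $p$'' caveat in the statement.

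Next I would optimize $k$. The total number of samples over all rounds is $\sum_{j=1}^{k} Z_j \leq 5kp\,(2\ln p/\epsilon)^{1/k}$, and differentiating $k\,(2\ln p/\epsilon)^{1/k}$ with respect to $k$ and setting the derivative to zero gives $k = \ln(2\ln p/\epsilon) = \Theta(\log(\log p/\epsilon))$ (the computation already displayed in the excerpt). At this value $(2\ln p/\epsilon)^{1/k} = e$, so the per-round sample size is $5ep = \mathcal{O}(p)$, i.e.\ $\mathcal{O}(1)$ samples per processor per round, and the overall sample count is $\mathcal{O}(p\log(\log p/\epsilon))$, as claimed.

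The main obstacle is the bookkeeping of the high-probability events rather than any single inequality. The bounds chain together: Lemma~\ref{lemma_k_rounds_SampleSize} for round $j$ is conditioned on the event $|\gamma_{j-1}| \leq 4N/s_{j-1}$ furnished by Lemma~\ref{lemma_K_ROUNDS_BOUND} for round $j-1$, and Lemma~\ref{lemma_k_ROUNDS_stop} for the final round presumes the earlier intervals behaved as analyzed. I would therefore take a union bound over the $\mathcal{O}(\log(\log p/\epsilon))$ rounds of all per-round failure events (each of probability $p^{-\Omega(1)}$ or $e^{-\Omega(p)}$); since the number of rounds is only polylogarithmic in $p$, the aggregate failure probability is still $p^{-\Omega(1)}$, so conditioning on the intersection of the good events is legitimate and HSS produces a $(1+\epsilon)$-balanced partition w.h.p. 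This is essentially a sharpening of Theorem~\ref{k rounds result} at the cost-minimizing choice of $k$, so no additional probabilistic machinery is required.
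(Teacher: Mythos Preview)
Your proposal is correct and follows essentially the same route as the paper: instantiate the geometric schedule $s_j=(2\ln p/\epsilon)^{j/k}$, invoke Lemmas~\ref{lemma_k_ROUNDS_stop} and~\ref{lemma_k_rounds_SampleSize} to get correctness and the per-round bound $5p(2\ln p/\epsilon)^{1/k}$, then differentiate $k\cdot(2\ln p/\epsilon)^{1/k}$ to find $k=\Theta(\log(\log p/\epsilon))$, at which point the per-round sample size collapses to $5ep=\mathcal{O}(p)$. Your explicit union bound over the $\mathcal{O}(\log(\log p/\epsilon))$ rounds and your check of the $s_j<\sqrt{2p/\ln p}$ hypothesis are slightly more careful than the paper, which leaves both implicit (the latter is relegated to a footnote), but the argument is otherwise identical.
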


Setting $\epsilon = p/N$ results in exact splitting and hence we get the following result for exact splitting.

\begin{theorem}
HSS with $\mathcal{O}(p)$ samples per round overall can achieve exact splitting in $O(\log N/p + \log \log p)$ rounds.
\end{theorem}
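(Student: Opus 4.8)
The plan is to obtain this as a corollary of Theorem~\ref{optimal rounds result} by instantiating $\epsilon$ small enough that the only admissible splitter ranks are the exact ones. Assuming $p \mid N$ (so that exact splitting is well defined), I would set $\epsilon = p/(2N)$. Then each target range $\mathcal{T}_i$ is an interval of width $N\epsilon/p < 1$ centered at the integer $Ni/p$, so the unique rank it can contain is $Ni/p$ itself; consequently a globally balanced partition with this $\epsilon$ is precisely an exact splitting, and the per-processor bound $(1+\epsilon)N/p$ collapses to $N/p + \mathcal{O}(1)$, hence to $N/p$ once integrality is accounted for.

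With this choice of $\epsilon$, Theorem~\ref{optimal rounds result} states that HSS attains $(1+\epsilon)$ load balance w.h.p.\ using $k = \mathcal{O}(\log(\log p/\epsilon))$ histogramming rounds with $\mathcal{O}(p)$ samples overall per round. Substituting $\epsilon = \Theta(p/N)$ gives $\log(\log p/\epsilon) = \log\!\big(\tfrac{N\log p}{p}\big) = \log(N/p) + \log\log p + \mathcal{O}(1)$, so $k = \mathcal{O}(\log(N/p) + \log\log p)$, which is the claimed round count; the per-round sample bound of $\mathcal{O}(p)$ continues to hold, modulo the treatment of the last round described next.

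The one step requiring care is the final round, where the prescribed sampling ratio $s_k = 2\ln p/\epsilon = \Theta(\tfrac{N}{p}\log p)$ is so large that the nominal per-key inclusion probability $ps_k/N = \Theta(\log p)$ exceeds $1$; the rule is then read as ``include every key of $\gamma_{k-1}$''. This automatically satisfies the conclusion of Lemma~\ref{lemma_k_ROUNDS_stop}, since each exact rank $Ni/p$ lies in the splitter interval $[L_{k-1}(i), U_{k-1}(i)]$ by construction and is therefore present in $\gamma_{k-1}$; and the number of keys sampled is just $|\gamma_{k-1}|$, which by Lemma~\ref{lemma_K_ROUNDS_BOUND} is $\mathcal{O}(N/s_{k-1}) = \mathcal{O}(p)$ w.h.p.\ under the schedule $s_j = (2\ln p/\epsilon)^{j/k}$. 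For rounds $1,\dots,k-1$ the interval-shrinkage and sample-size accounting are exactly those already carried out for Theorem~\ref{k rounds result} via Lemmas~\ref{lemma_K_ROUNDS_BOUND} and~\ref{lemma_k_rounds_SampleSize}.

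I expect the main obstacle to be checking that the hypotheses of those lemmas remain valid as $\epsilon$ is driven down to $\Theta(p/N)$ — in particular the condition $s_{k-1} \le \sqrt{2p/\ln p}$ needed by Lemma~\ref{lemma_K_ROUNDS_BOUND} (the ``large enough $p$'' caveat of Theorem~\ref{k rounds result}), which here amounts to a mild polynomial bound on $N$ in terms of $p$. Outside that regime one would precede these $k$ rounds with a few coarser sampling rounds (or one regular-sampling step) to first bring $|\gamma|$ down to the range where the lemmas apply; this contributes only lower-order terms and does not change the stated asymptotics.
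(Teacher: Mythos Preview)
Your approach is essentially the same as the paper's: the paper derives the theorem in one line by setting $\epsilon = p/N$ in Theorem~\ref{optimal rounds result}, without further justification. You carry out the same instantiation (with $\epsilon = p/(2N)$) and add the arithmetic for the round count and a more careful discussion of the final round and of the side condition $s_j < \sqrt{2p/\ln p}$ from Lemma~\ref{lemma_K_ROUNDS_BOUND}; the paper does not address these points at all, so your write-up is strictly more detailed than the original.
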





\ifapproxhistogram

\subsection{Approximate Histogramming using random sampling} \label{Approximate Histogramming}

Often parallel data processing systems have humongous amounts of data and computing histograms repeatedly might be expensive. In this section, we show that an approximate but fairly accurate histogram can be computed using a sample representative of the input data at every processor. The representative sample size at every processor is $\mathcal{O}(\sqrt{p\log p}/\epsilon)$.

Assume that for approximate histogramming, every processor maintains a representative sample of $s$ keys. We use a sampling technique similar to random sampling as suggested by Blelloch. et al.~\cite{blelloch1998experimental}. Every processor divides its sorted input into $s$ blocks of size $N/ps$. From every block, a random key is selected as a part of its representative sample. To anSW=Trueer rank queries of the following type: given a key $k$ find rank of $k$ in the overall input, a reduction is performed on local ranks obtained using the representative sample at every processor, rather than the entire input.
If $r$ denotes the number of representative sample keys $\leqslant k$ across all processors, the algorithm returns $Nr/ps$.

\begin{lemma} \label{approx histogramming lemma}
For $s=\sqrt{2p\ln p}/\epsilon$, the rank returned by the above algorithm is within a distance of $N\epsilon/p$ from true rank of $k$ w.h.p.
\end{lemma}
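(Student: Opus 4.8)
The plan is to exhibit the returned value $Nr/(ps)$ as an unbiased estimator of $\mathrm{rank}(k)$ whose deviation is controlled, via Hoeffding's inequality, by only $p$ independent Bernoulli variables rather than by all $ps$ sampled indicators. First I would write $r=\sum_{q=1}^{p} r_q$, where $r_q$ is the number of the $s$ representative-sample keys of processor $q$ that are $\le k$. Because the local data of each processor is sorted and its $s$ blocks are contiguous ranges of that sorted order, every block of processor $q$ is of exactly one of three kinds: all its keys are $\le k$ (such a block contributes $1$ to $r_q$ with certainty), all its keys are $>k$ (contributes $0$ with certainty), or it is the unique block straddling the threshold, which contributes a Bernoulli random variable $Y_q$ with $P[Y_q=1]=m_q ps/N$, where $m_q\in\{0,1,\dots,N/ps\}$ is the number of its keys that are $\le k$. (If the $\le k$/$>k$ boundary on processor $q$ falls on a block boundary there is no straddling block; this is the harmless degenerate case $m_q\in\{0,N/ps\}$.)

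Let $F_q$ denote the number of all-$\le k$ blocks on processor $q$. Then $\mathrm{rank}(k)=\sum_q\big(F_q\cdot\tfrac{N}{ps}+m_q\big)$ up to an inconsequential $\pm1$ (depending on whether $k$ itself lies in the input), while the algorithm returns $\tfrac{N}{ps}r=\sum_q\tfrac{N}{ps}(F_q+Y_q)$, so the error is
\begin{align*}
\frac{N}{ps}\,r-\mathrm{rank}(k)=\sum_{q=1}^{p} Z_q,\qquad Z_q:=\frac{N}{ps}\,Y_q-m_q .
\end{align*}
Each $Z_q$ has mean $\tfrac{N}{ps}\cdot\tfrac{m_q ps}{N}-m_q=0$, takes only the two values $-m_q$ and $\tfrac{N}{ps}-m_q$ (hence lies in an interval of width $N/(ps)$), and the $Z_q$ are mutually independent, being determined by the independent samples drawn in pairwise disjoint blocks on distinct processors.

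Applying Hoeffding's inequality to the sum of these $p$ independent variables, each confined to an interval of length $N/(ps)$, gives
\begin{align*}
P\!\left[\Big|\tfrac{N}{ps}\,r-\mathrm{rank}(k)\Big|\ge \tfrac{N\epsilon}{p}\right]\le 2\exp\!\left(-\frac{2(N\epsilon/p)^2}{p\,(N/ps)^2}\right)=2\exp\!\left(-\frac{2\epsilon^2 s^2}{p}\right),
\end{align*}
and substituting $s=\sqrt{2p\ln p}/\epsilon$ turns the exponent into $-4\ln p$, so the failure probability is at most $2p^{-4}$; a union bound over the probe keys used in a histogramming round keeps this w.h.p.

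The one genuinely delicate point — and the heart of the argument — is the reduction from $ps$ random indicators to $p$ effective Bernoulli variables: on each processor only the single block straddling $k$ is random, all others contributing deterministically to $r_q$. This matters because a crude Hoeffding bound over all $ps$ sampled keys would only yield an exponent of order $\ln p/s$, which is worthless once $s$ is large; it is exactly the per-processor ``one fractional block, of width only $N/(ps)$'' structure that forces tight concentration. Everything else — unbiasedness of the estimator, the interval-width bookkeeping in Hoeffding, and the $\pm1$ rounding/boundary issues — is routine.
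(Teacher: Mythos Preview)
Your proof is correct and essentially identical to the paper's own argument: the paper also observes that on each processor only the single block straddling $k$ contributes randomness, introduces a Bernoulli variable for that block (your $Y_q$ is the paper's $X_i$, your count $m_q$ is the paper's fraction $l_i$ times $N/ps$), notes unbiasedness, and then applies Hoeffding over the $p$ independent per-processor variables to get the same $2e^{-2(s\epsilon)^2/p}=2p^{-4}$ bound. Your additional remark about why reducing to $p$ rather than $ps$ variables is essential is a nice elaboration, but the core steps match exactly.
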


\begin{proof} Denote the set of sorted sampled representative keys by $V = \{\lambda_{1}, \lambda_{2}..., \lambda_{ps}\}$. Consider a processor $i$. Let the representative sample at processor $i$ be $V_{i} = \{\lambda^{i}_{1}, \lambda^{i}_{2}..., \lambda^{i}_{s}\}$. Let the number of blocks (of size $N/ps$) in processor $i$ that are completely less than $k$ be $b_{i}$. Clearly, at least $b_{i}$ samples and at most $(b_{i}+1)$ samples in $V_{i}$ are less than $k$. Let the fraction of keys in the $(b_{i}+1)-th$  block that are $\leqslant k$ be $l_{i}$, $l_{i} < 1$. The total number of keys in processor $i$ that are less than $k$ is $(b_{i}+l_{i})N/ps$.
Also, the probability that $(b_{i}+1)-th$ sample: $\lambda_{b_{i}+1}^{i}$ $\leqslant k$ is $l_{i}$. Let $X_{i}$ be the Bernoulli random variable denoting if $\lambda_{b_{i}+1}^{i} \leqslant k$. Thus, $P(X_{i} = 1) = l_{i}$. 

 True rank of $k$:$R_{k}$ is given by $R_{k} = \sum_{i} (b_{i} + l_{i})N/ps$. Rank of $k$ as returned by the algorithm is $R = \sum_{i} (b_{i} + X_{i})N/ps$. Clearly, $E[R] = R_{k}$. 

 Since, all random samples are chosen independently, all $X_{i}$'s are independent. For $s=\sqrt{2p\ln p}/\epsilon$, we have

 \begin{align*}
 & P\Big(|R - R_{k}| > \frac{N\epsilon}{p}\Big) \\
 &= P\Big(\big|\sum_{i=1}^{p} (b_{i} + X_{i})\frac{N}{ps} - \sum_{i=1}^{p} (b_{i} + l_{i})\frac{N}{ps}\big| > \frac{N\epsilon}{p}\Big) \\
&= P\Big(\big|\sum_{i} (X_{i} - l_{i})\big| > s\epsilon \Big) \\
&\leqslant 2e^{-\frac{2(s\epsilon)^{2}}{p}}  = 2e^{-4\ln p}  = 2p^{-4} \qed
 \end{align*} 

The last inequality is obtained using Hoeffding's inequality which holds for non-identical, independent indicator random variables.
\end{proof}

If the above algorithm reports the rank of a key $k$ to be in $(Ni/p - N\epsilon/p,\ Ni/p + N\epsilon/p)$, then using Lemma~\ref{approx histogramming lemma} its true rank lies in $(Ni/p - 2N\epsilon/p,\ Ni/p + 2N\epsilon/p)$ w.h.p.. The above enhancement can now be used as an oracle to compute the histogram, since histogram is a bunch of rank queries, as long as the size of the histogram is smaller than $p^{4}$. It is useful if histograms need to be computed repeatedly.

\fi

\section{Running times} \label{running-times}

We model an algorithm's parallel execution as a sequence of BSP supersteps, during each of which, processors perform computation locally, then send and receive messages.
An algorithm's BSP complexity consists of three components:
\begin{enumerate}[leftmargin=*]
\item the number of supersteps (synchronization cost),
\item the sum over all supersteps of the maximum amount of computation done by any processor during the superstep (computation cost),
\item the sum over all supersteps of the maximum amount of data sent or received by any processor during the superstep (communication cost).
\end{enumerate}
We permit processors to send and receive $p$ messages every superstep, which simplifies the analysis of the all-to-all data exchanges.
The model captures the performance trade-offs for our purpose, more histogramming rounds increase number of supersteps but lower communication cost.


We analyse the computation and communication cost for sample sort and HSS. 
Both algorithms have the same cost for initial local sorting, broadcasting splitters and data exchange. The computation cost of local sorting is $\mathcal{O}((N/p) \log \frac{N}{p})$. No communication is involved in local sorting. The cost of broadcasting splitters once they are determined is $\mathcal{O}(p)$.
The final data movement requires all data to be sent to their destination processors, hence the communication cost involved is $\mathcal{O}(N/p)$. Once a processor receives all data pieces, it merges them, which takes $\mathcal{O}((N/p)\log p)$ computation time.

\subsection{Cost of Sampling}
Collecting a sample of overall size $S$ onto one processor, requires a single BSP superstep with a communication cost of $\mathcal{O}(S)$.
In practice, random sampling is usually performed with each processor selecting $S/p$ elements, and a {\it gather} collective communication protocol, which collects all samples onto one processor.
Sorting the overall sample on a central processor costs $\mathcal{O}(S\log p)$ work locally if each processor provides a sorted contribution to the sample. 

\subsection{Cost of Histogramming}

A local histogram can be computed in $\mathcal{O}(S\log (N/p))$ time using $S$ binary searches on the local sorted input, where $S$ denotes the size of the histogram. A global histogram is computed by reducing all local histograms. An $S$-item reduction requires 2 BSP supersteps (one for a reduce-scatter and one for an all-gather) with $\mathcal{O}(S)$ communication and computation~\cite{pjevsivac2007performance,thakur2003improving}.
The histogram probes and the splitter intervals are broadcast to every processor for histogramming. The communication cost of broadcasting a length $S$ message is $\mathcal{O}(S)$.
Thus, both the computation and communication costs of histogramming are proportional to the overall sample size. 

The BSP complexity of the data partitioning step of sample sort, AMS sort, and HSS are shown in Table~\ref{running times table}. 
AMS sort and HSS require significantly fewer samples due to histogramming.
We observe that the best HSS configuration has strictly superior complexity to all other considered algorithms.


\subsection{HSS with Multiple Stages}
Like AMS-sort, HSS can be generalized to a multi-stage algorithm. We refer readers to~\cite{axtmann2015practical} for details on multi-stage AMS sort. We simply consider the benefit of replacing the data-partitioning step in multi-stage AMS-sort with multiple histogramming rounds of HSS. The rest of the algorithm involving the data exchange steps is unchanged. The running time complexities of $l$ stage AMS-sort and $l$ stage HSS in the BSP model are shown in Table~\ref{running times table multistage}. In each step, a processor group gets divided into $r=p^{1/l}$ processor groups.

The first local sorting takes $\mathcal{O}(N\log (N/p)/p)$ time. At the end of each stage, every processor receives $\mathcal{O}(r)$ data pieces that it needs to merge. So, the total computation cost of local sorting after every stage excluding the first local sorting is $\mathcal{O}((lN/p)\log r) = \mathcal{O}((N/p)\log p)$. This gives an overall computation cost of local sorting as $\mathcal{O}((N\log (N/p))/p + (N\log p)/p) = \mathcal{O}((N\log N)/p)$.
The computation cost of sampling and histogramming for HSS is $\mathcal{O}(r \log ((\log r)/\epsilon))\log N)$ per stage.
Each sampling, histogramming and data exchange step takes $\mathcal{O}(1)$ BSP supersteps.
The number and cost of all of these steps is uniform throughout stages, so all of these costs are multiplied by a factor of $l$, the number of stages.
Consequently, we observe a trade-off between the cost of data partitioning and the cost of data exchanges that depends on $l$.

\section{Implementation}  \label{experiments}

\begin{figure*}
    \begin{minipage}{0.49\textwidth}
    \centering
    \vskip -0.4 cm 
    	\includegraphics[width=0.75\textwidth,height=144pt]{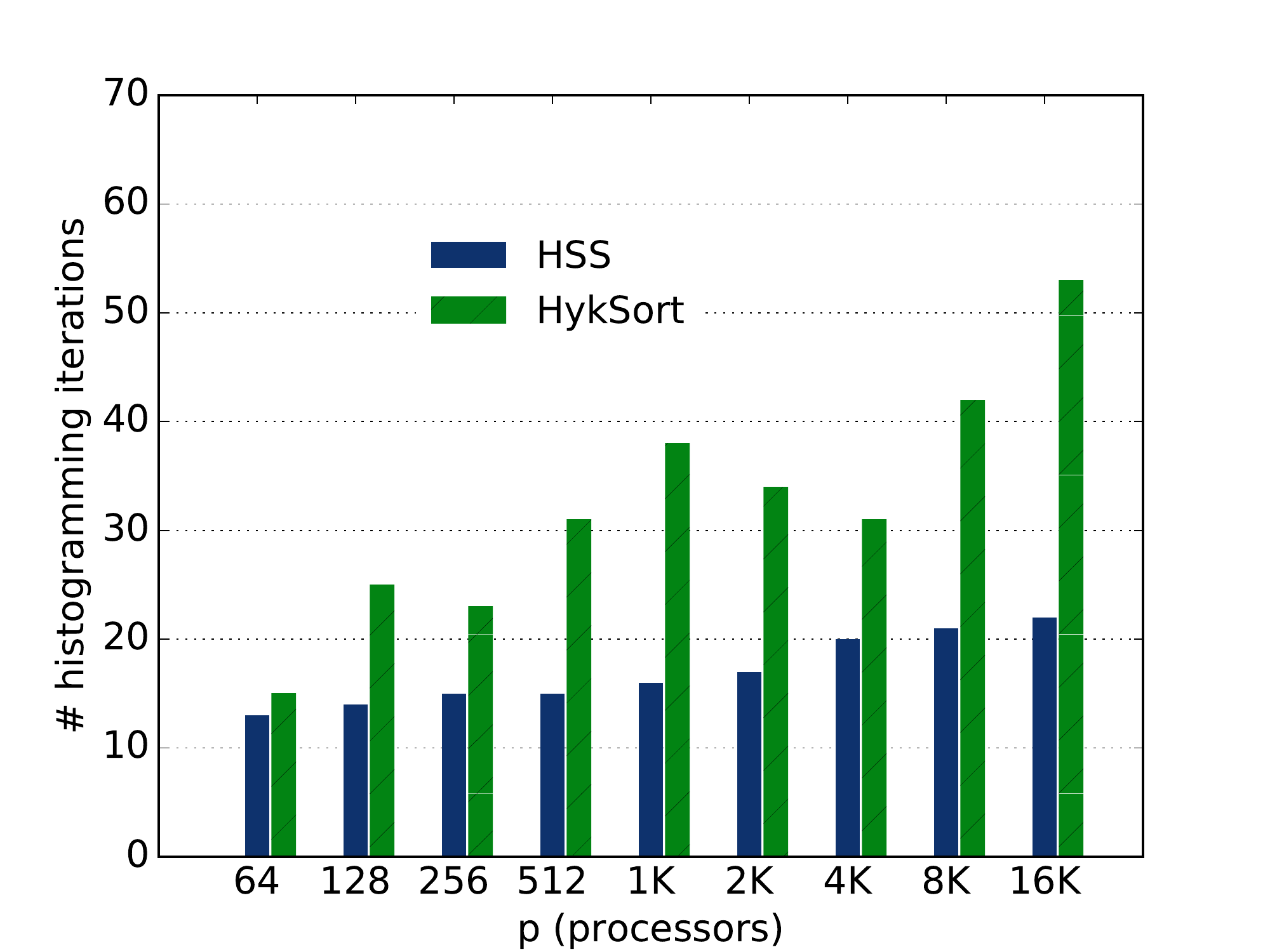}
		\compactcaption{Number of histogramming iterations for HSS vs HykSort. We used 1 sample per processor per round for this experiment. The worst case number of iterations increases more gradually in HSS ($\mathcal{O}(\log (\log p/\epsilon))$) than Hyksort ($\mathcal{\Omega}(\log (p/\epsilon)/\log \log p)$).  Experiments were run on the Stampede 2 supercomputer.} 
		\label{fig:hyk_hss_iterations}
  \end{minipage}
  \hskip 0.2 cm
  \begin{minipage}{0.49\textwidth}
    \centering
  \vskip -0.4 cm
\includegraphics[width=0.75\textwidth]{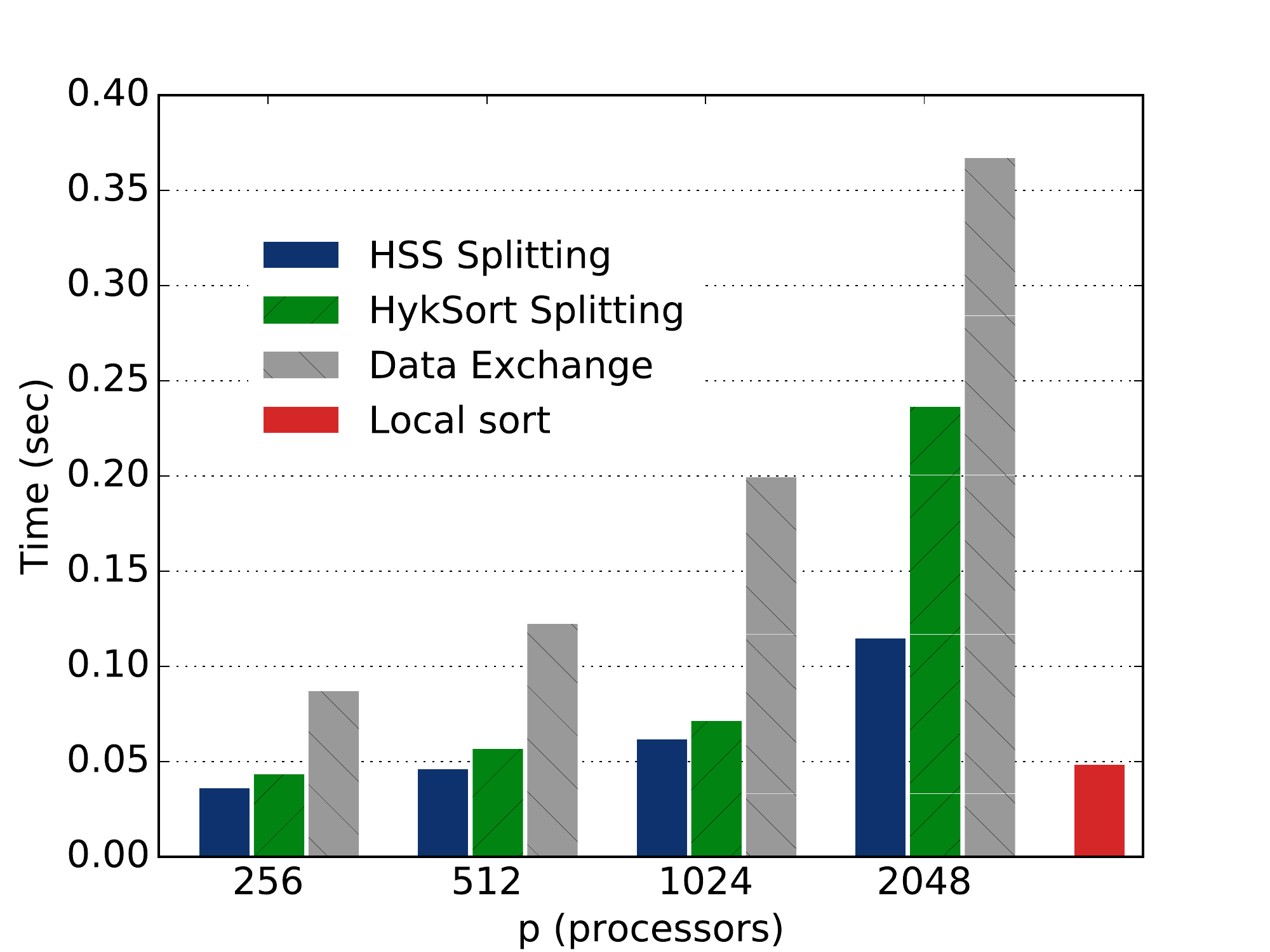}
		\compactcaption{Single staged runs with 16 threads per processor, 1M 8-byte long keys per processor.  Experiments were run on the Stampede 2 supercomputer.} 
		\label{fig:singlestage}
  \end{minipage}
\end{figure*}



We implemented HSS in C++11 in the Charm++~\cite{acun2014parallel,kale1993Charm++,bhatele20145} framework. 
Charm++ allows an application to create any number of virtual processors, called chares which are scheduled by the runtime system. Additionally, chares can be tied to a specific core or a node (called \textit{group} and \textit{nodegroup} chares in Charm++ terminoology).

 Our implementation comprises of three  phases; local sorting of input data, splitter determination using histogramming, and final data exchange. We use C++ \texttt{sort} for local sorting for the first phase. Let $t$ denote the total number of threads and $p$ denote the number of processors (processes or ranks).

\begin{itemize}[leftmargin=0.5em]
\item{\textbf{Histogramming Phase}}:
The histogramming phase determines $p-1$ splitters for process level splitting. For the sampling phase before every histogramming round, each thread samples probe keys, from its input, which lie in the union of splitter intervals. If $\delta$ denotes the fraction of input covered by the splitter intervals, then every thread picks $s/\delta$ samples from its entire input and discards samples that don't lie in any of the splitter intervals. This way the expected size of the overall sample is $s \times t$, where $s$ can be thought of as the oversampling ratio. The overall sample is assembled at the central processor and broadcast for histogramming. Every thread computes a local histogram using binary searches on it's locally sorted input. The local histograms are summed up using a reduction and sent to the central processor. 

\item{\textbf{Data Exchange}}:
Once every processor receives the splitters, input data from all threads within a processor are merged and partitioned into $p$ messages, one for each processor. 

\item{\textbf{Local Merging}}:
Once a process receives all data that falls in its bucket, it merges and redistributes data among its threads using HSS with one round.
\end{itemize}


\subsection{Handling duplicates via implicit tagging} \label{duplicates}
We use the following standard technique to deal with duplicates in the inputs. 
We enforce a strict ordering on keys by implicitly replacing each key $k$ with a triplet $(k, processor, ind)$, where $processor$ denotes the processor that $k$ resides on and $ind$ denotes the index in the local data structure, where the key is stored. 

\subsection{Multi-staged sorting and comparisons with other algorithms}
To compare HSS to Hyksort in both single-staged and multi-staged settings, we implemented the splitting algorithm of HSS in the HykSort code~\cite{hyksortCode}, written in the MPI framework. This allows a fair comparison without the side-effects of two different parallel programming frameworks, namely Charm++ and MPI.

\section{Experimental Results}
In this section, we describe our experimental results. The goal of our experiments is to demonstrate the fast splitter determination of HSS compared to other state of the art algorithms and to demonstrate its benefits in both single-staged and multi-staged settings. We also include a brief application study to supplement our results.  

\subsection{Fast convergence of splitters}
HSS determines all splitters in $\mathcal{O}(\log (\log p)/\epsilon)$ rounds using $\mathcal{O}(1)$ samples per round per processor. This results in faster convergence compared to HykSort, which requires $\Omega(\log (p/\epsilon)/\log \log p)$ rounds with the same number of keys. We ran the splitting algorithm of HSS and HykSort with 1 sample per processor per round and $\epsilon = 2\%$ to verify the same. As illustrated in Figure~\ref{fig:hyk_hss_iterations}, the number of iterations
in HSS increases gradually compared to HykSort. Note that the execution time of the splitting phase is directly proportional to the number of iterations.

\subsection{Weak scaling and comparison to HykSort} \label{single stage}
In this section we describe single-staged experiments and comparisons to HykSort. We implemented HSS's splitting algorithm in HykSort's code for the fairest comparison. For this set of experiments we used 1 million keys per processor and 16 threads per processor. We used a probe size equal to 5p per histogramming round for both HSS and Hyksort which we found to be a reasonable sample size. We also found the default sample size of Hyksort, set as per ~\cite{sundar2013hyksort} to be suboptimal for this set of experiments. 
Figure~\ref{fig:singlestage} illustrates weak scaling experiments. Besides the splitting time for the splitter determination step, the local sorting time and the data exchange times are also shown (which are common to both Hyksort and HSS). As can be observed from the figure, the difference between HSS and Hyksort's splitting phase becomes more apparent with increasing number of processors. The improved splitting of HSS results in a modest improvement of 10-15\% in the overall running time for higher number of processors.  

Single-staged AMS sort requires about $2p\ max(1/\epsilon, \log p)) \approx 100 p$ for $p=2048$ samples to achieve the desired splitting. In contrast, HSS took 6 iterations to converge with $5p$ samples per iteration resulting in about $30p$ samples overall. The execution time of the splitting phase is directly proportional to the number of samples, hence one can expect single-staged AMS to take approximately $3x$ time for the splitting phase. 

\begin{figure*}[ht]
  \begin{minipage}[b]{0.49\linewidth}
    \centering
     \vskip -0.4 cm
     	\includegraphics[width=0.7\textwidth]{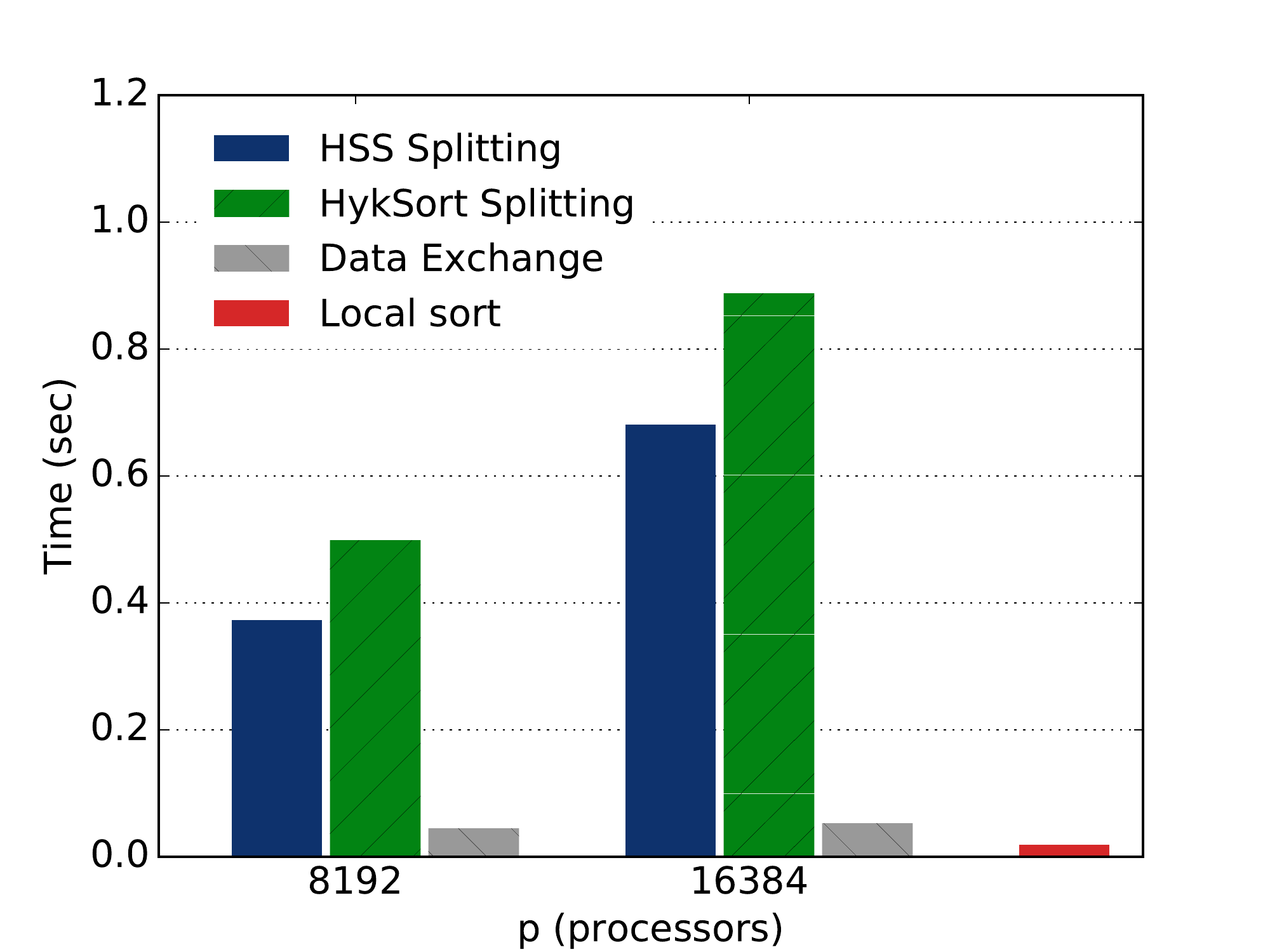}
	\compactcaption{2-staged runs with 1 thread per processor, $10^5$ 8-byte long keys per processor. Experiments were run on the Stampede 2 supercomputer.} 
	\label{fig:multiplestage}
  \end{minipage}
  \hskip 0.2cm
    \begin{minipage}[b]{0.49\linewidth}
    \centering
      \vskip -0.3cm
	\includegraphics[width=0.7\textwidth]{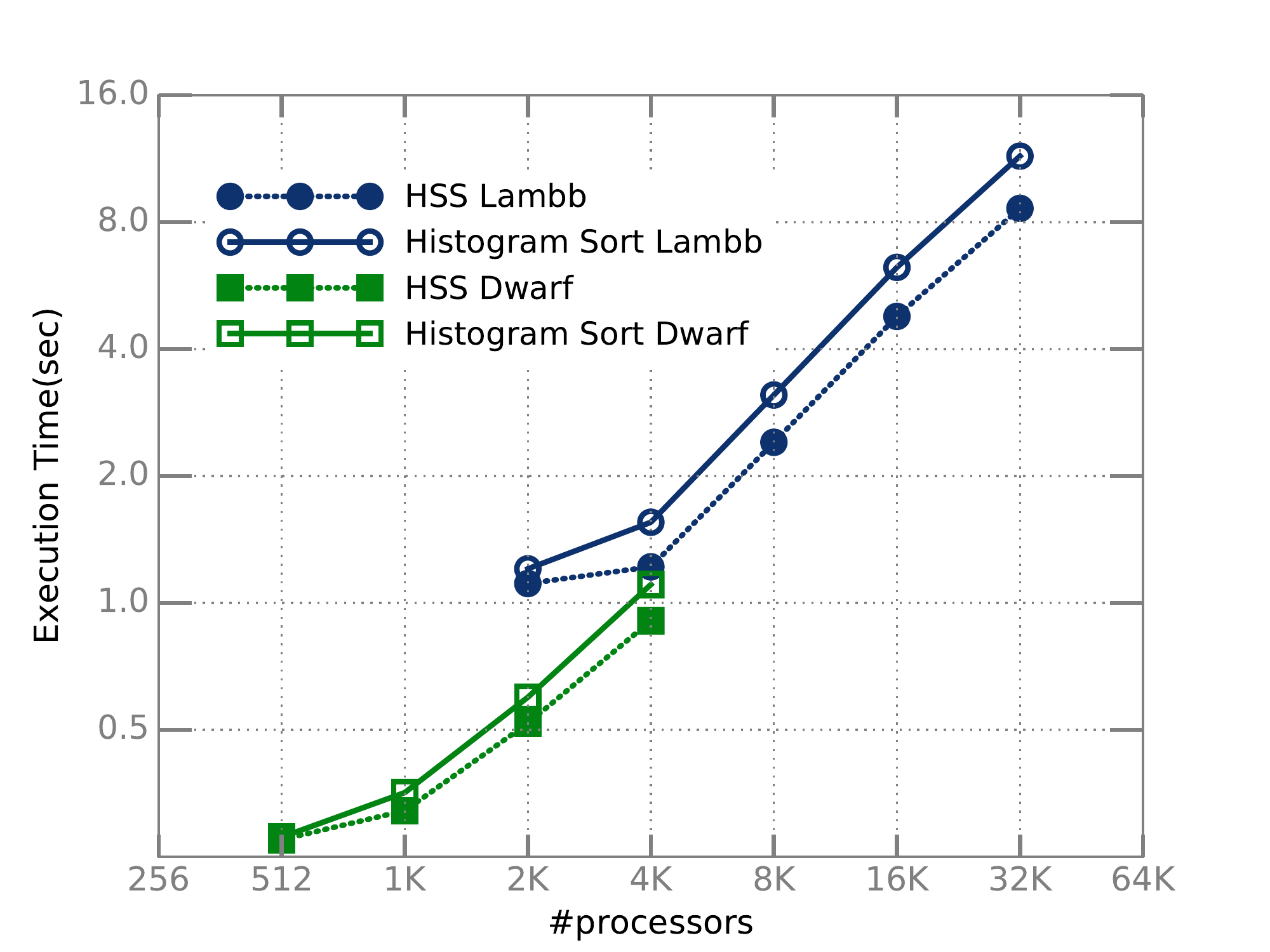}
	\compactcaption{Experiments showing performance of sorting routine of Changa. Datasets used were Lambb and Dwarf.  Experiments were run on the Bluegene Mira supercomputer.} 
	\label{fig:changa-sorting}
  \end{minipage}
\end{figure*}

\begin{figure}[ht]
  \hskip 0.25cm
    \begin{minipage}[b]{\linewidth}
    \centering
      \vskip -0.2cm
	\includegraphics[width=0.78\textwidth]{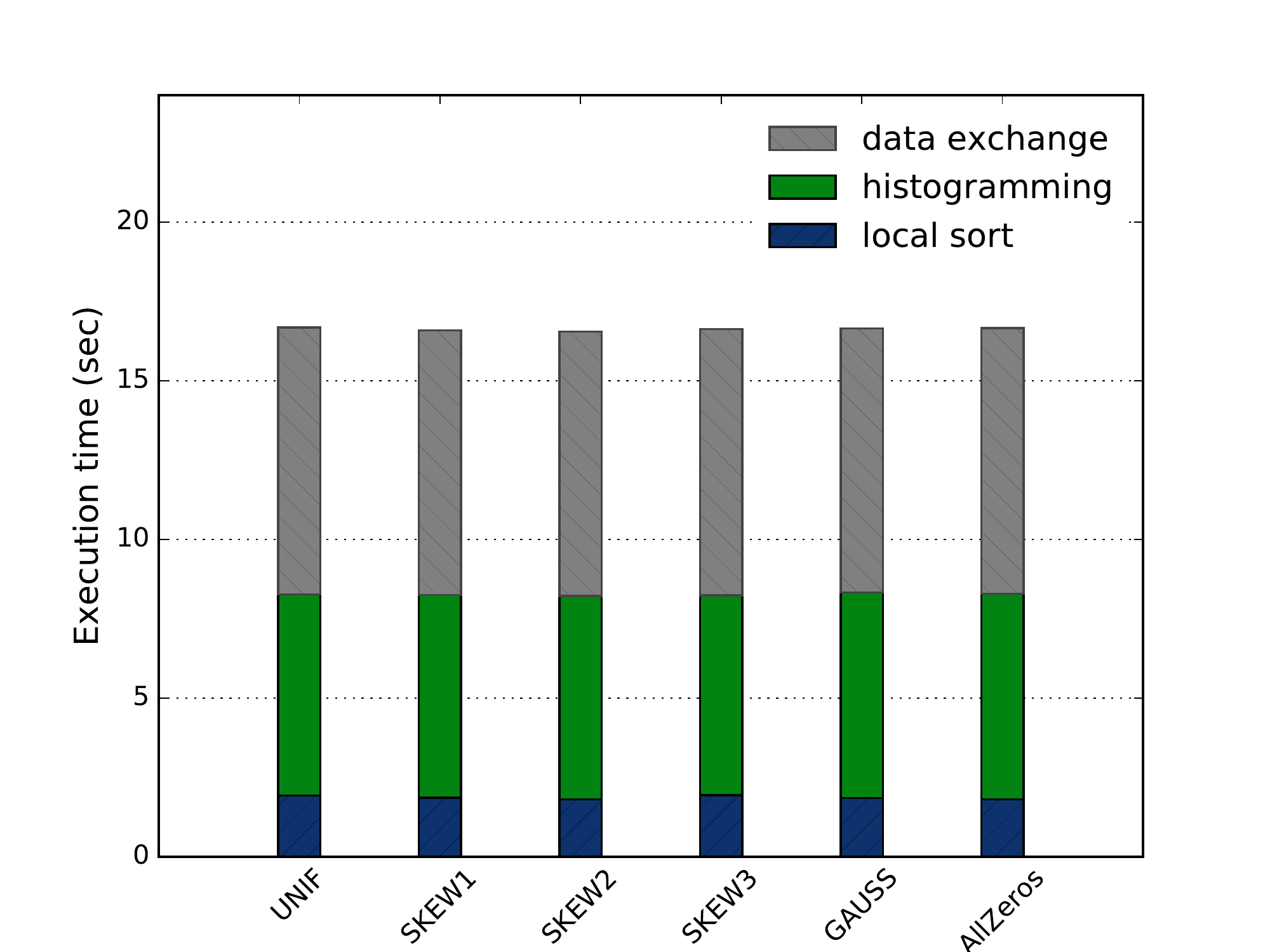}
		\compactcaption{HSS performance with different input distributions for 2M keys/processor and 32K processors. 
		We used 1 thread per processor to accentuate the splitter selection (histogramming) time.  Experiments were run on the Bluegene Mira supercomputer.} 
		\label{fig:diffdist} 
  \end{minipage}
\vskip -0.2cm
\end{figure}

\subsection{Multi-staged experiments}
In this section, we present multi-staged experiments (specifically with 2 stages) where data is first distributed among $k$ processor groups consisting of $p/k$ processors each for $k-way$ sorting. Multi-staged sorting is helpful when the number of message startups ($=\ p$ messages) per processor becomes a bottleneck. This happens for a large number of processors or when the number of keys per processor is small enough that very fine grained messages have to be sent to other processes in the data exchange step which slows down the sorting operation. Note that there is an overhead associated with multi-staged sorting as data needs to be moved multiple times in comparison to single-staged sorting where data is exchanged just once between the source and the destination processors. 

For this set of experiments, we used $10^5$ keys per processor and 1 thread per processor. We used $k=128$ as we found it to be a reasonable threshold for using 2-staged sorting. It also happens to be the recommended setting as per ~\cite{sundar2013hyksort}. Note that $\lceil\sqrt{p}\ \rceil = 128$ for $p=16384$. We used a tolerance threshold $\epsilon=1\%$ per stage so that the overall imbalance is at most $2\%$. A single thread per process was used to maximize the number of processors. Accordingly, we also scaled down the number of keys/process when compared to section~\ref{single stage}. This also kept the number of keys/thread comparable to section~\ref{single stage} (it is slightly higher in this case).

Figure~\ref{fig:multiplestage} illustrates 2-staged runs for $p=8192$ and $p=16384$ processors. As can be seen from the figure, multiple stage sorting alleviates the data exchange bottleneck and the splitter determination step becomes the major bottleneck. Figure~\ref{fig:multiplestage} demonstrates the benefit of using HSS in a multi-staged setting. HSS improves the overall running time by $15-20\%$ for both $p=8192$ and $p=16384$. We verified that this improvement comes from improving the number of iterations for convergence in each stage: HSS converged in 6 iterations while Hyksort took 9 iterations.

\subsection{Strong scaling in ChaNGa}
\ifchangasorting
We implemented HSS in ChaNGa~\cite{jetley2008massively}, a popular astronomical application that often runs on several thousands of processors. Sorting in ChaNGa poses unique challenges for two reasons- (i) it employs virtual processors and hence the number of buckets (virtual processors) are far more than the actual number of processors. In our experiments, the number of virtual buckets were typically \textbf{~10x} the number of cores and (ii) the virtual processors can be arbitrarily placed across physical nodes and buckets on a single node need not be contiguous. Hence, most of our shared memory optimizations are not useful. The reason ChaNGa uses more virtual processors than cores is to accelerate other stages of computation, made possible by efficient parallel data overpartitioning. 

Figure~\ref{fig:changa-sorting} compares sorting performance of ChaNGa with HSS and the existing Histogram sort implementation for two datasets: Dwarf and Lambb (see ~\cite{jetley2008massively} for details). The datasets have a constant number of keys, so Figure~\ref{fig:changa-sorting} represents strong scaling results. HSS results in up to 25\% improvement over Histogram sort. Note that Histogram sort is much more sensitive to the input distribution than HSS as it does not employ sampling (see section~\ref{input distribution}).
The parallel sorting execution increases for the same dataset as we increase the number of processors. This may appear odd at first. The majority of sorting time is spent in data splitting, and since the number of buckets increase multiplicatively with the number of processors, we see an increase in the execution time. 
The performance results suggest it would be possible to improve strong scaling of the splitting algorithm within ChaNGa by using a multi-staged version of HSS. We leave this for future work. 
\fi

\subsection{Effect of input distribution} \label{input distribution}

We ran HSS with the following input distributions to verify that its running time is independent of the distribution:
\begin{enumerate}[leftmargin=*]
\item UNIF: Uniformly at random from the entire range
\item SKEW1: Half of the keys are picked uniformly at random from the entire range, the other half, uniformly at random from a small range of size 1000
\item SKEW2: Uniformly at random from the range $[0,100]$
\item SKEW3: Each key is bitwise and of two uniformly at random chosen keys
\item GAUSS: Gaussian distributed
\item AllZeros: All keys are set to 0
\end{enumerate}
As figure~\ref{fig:diffdist} illustrates, HSS is impervious to the input distribution as expected from the analysis. To underscore the histogramming cost, we used 1 thread per process since the number of splitters for process level splitting is equal to the number of processes.

\section{Conclusion}


We presented Histogram sort with sampling (HSS), which combines sampling and histogramming to accomplish fast splitter determination. 
We showed that for approximate-splitting ($\epsilon=\mathcal{O}(1)$), our algorithm requires $\Theta(\log \log p)$ histogramming rounds  and an overall sample of size $\Theta(p\log \log p)$, improving the communication cost by a factor of $\Theta(\log p/\log \log p)$ with respect to the best known partitioning algorithm. HSS is theoretically more efficient for both approximate and exact (memory-efficient) splitting, while minimizing the number of data exchanges for both small and large degrees of parallelism.
Our work provides theoretical groundwork for the benefits of iterative histogramming in splitter selection, a technique that is known to work well in practice. The reduced sample size makes HSS extremely practical for massively parallel applications, scaling to tens of thousands of processors. 
We demonstrated speed-ups with HSS over two other state-of-the-art parallel sorting implementations for both single-staged and multi-staged settings.
The robustness of our results makes a compelling case for HSS as the algorithm of choice for large scale parallel sorting.

\begin{acks}
  The authors are thankful to Omkar Thakoor, Umang Mathur and Harshita Prabha for helpful discussions and Nitin Bhat for helping with experiments. The authors acknowledge the Argonne Leadership Computing Facility (ALCF) and the Texas Advanced Computing Center (TACC) at The University of Texas at Austin  for providing HPC resources that have contributed to the research results reported within this paper. URL: \url{https://www.alcf.anl.gov/} and \url{http://www.tacc.utexas.edu}
\end{acks}


\bibliographystyle{ACM-Reference-Format}
\bibliography{paper}


\begin{thebibliography}{31}


\ifx \showCODEN    \undefined \def \showCODEN     #1{\unskip}     \fi
\ifx \showDOI      \undefined \def \showDOI       #1{#1}\fi
\ifx \showISBNx    \undefined \def \showISBNx     #1{\unskip}     \fi
\ifx \showISBNxiii \undefined \def \showISBNxiii  #1{\unskip}     \fi
\ifx \showISSN     \undefined \def \showISSN      #1{\unskip}     \fi
\ifx \showLCCN     \undefined \def \showLCCN      #1{\unskip}     \fi
\ifx \shownote     \undefined \def \shownote      #1{#1}          \fi
\ifx \showarticletitle \undefined \def \showarticletitle #1{#1}   \fi
\ifx \showURL      \undefined \def \showURL       {\relax}        \fi
\providecommand\bibfield[2]{#2}
\providecommand\bibinfo[2]{#2}
\providecommand\natexlab[1]{#1}
\providecommand\showeprint[2][]{arXiv:#2}

\bibitem[\protect\citeauthoryear{??}{hyk}{2013}]%
        {hyksortCode}
 \bibinfo{year}{2013}\natexlab{}.
\newblock \bibinfo{title}{Hyksort code}.
\newblock \bibinfo{howpublished}{\url{https://github.com/hsundar/usort}}.
  (\bibinfo{year}{2013}).
\newblock


\bibitem[\protect\citeauthoryear{??}{hss}{2019}]%
        {hssCode}
 \bibinfo{year}{2019}\natexlab{}.
\newblock \bibinfo{title}{HSS code}.
\newblock \bibinfo{howpublished}{\url{https://github.com/vipulharsh/hss}}.
  (\bibinfo{year}{2019}).
\newblock


\bibitem[\protect\citeauthoryear{Acun, Gupta, Jain, Langer, Menon, Mikida, Ni,
  Robson, Sun, Totoni, et~al\mbox{.}}{Acun et~al\mbox{.}}{2014}]%
        {acun2014parallel}
\bibfield{author}{\bibinfo{person}{Bilge Acun}, \bibinfo{person}{Abhishek
  Gupta}, \bibinfo{person}{Nikhil Jain}, \bibinfo{person}{Akhil Langer},
  \bibinfo{person}{Harshitha Menon}, \bibinfo{person}{Eric Mikida},
  \bibinfo{person}{Xiang Ni}, \bibinfo{person}{Michael Robson},
  \bibinfo{person}{Yanhua Sun}, \bibinfo{person}{Ehsan Totoni},
  {et~al\mbox{.}}} \bibinfo{year}{2014}\natexlab{}.
\newblock \showarticletitle{Parallel programming with migratable objects:
  charm++ in practice}. In \bibinfo{booktitle}{\emph{Proceedings of the
  International Conference for High Performance Computing, Networking, Storage
  and Analysis (SC)}}. IEEE, \bibinfo{pages}{647--658}.
\newblock


\bibitem[\protect\citeauthoryear{Ajtai, Koml{\'o}s, and Szemer{\'e}di}{Ajtai
  et~al\mbox{.}}{1983}]%
        {ajtai19830}
\bibfield{author}{\bibinfo{person}{Mikl{\'o}s Ajtai},
  \bibinfo{person}{J{\'a}nos Koml{\'o}s}, {and} \bibinfo{person}{Endre
  Szemer{\'e}di}.} \bibinfo{year}{1983}\natexlab{}.
\newblock \showarticletitle{{An O(n log n) sorting network}}. In
  \bibinfo{booktitle}{\emph{Proceedings of the 15th annual ACM Symposium on
  Theory of computing}}. ACM, \bibinfo{pages}{1--9}.
\newblock


\bibitem[\protect\citeauthoryear{Axtmann, Bingmann, Sanders, and
  Schulz}{Axtmann et~al\mbox{.}}{2015}]%
        {axtmann2015practical}
\bibfield{author}{\bibinfo{person}{Michael Axtmann}, \bibinfo{person}{Timo
  Bingmann}, \bibinfo{person}{Peter Sanders}, {and} \bibinfo{person}{Christian
  Schulz}.} \bibinfo{year}{2015}\natexlab{}.
\newblock \showarticletitle{Practical massively parallel sorting}. In
  \bibinfo{booktitle}{\emph{Proceedings of the 27th ACM Symposium on
  Parallelism in Algorithms and Architectures}}. ACM, \bibinfo{pages}{13--23}.
\newblock


\bibitem[\protect\citeauthoryear{Batcher}{Batcher}{1968}]%
        {batcher1968sorting}
\bibfield{author}{\bibinfo{person}{Kenneth~E Batcher}.}
  \bibinfo{year}{1968}\natexlab{}.
\newblock \showarticletitle{Sorting networks and their applications}. In
  \bibinfo{booktitle}{\emph{Proceedings of the April 30--May 2, 1968, spring
  joint computer conference}}. ACM, \bibinfo{pages}{307--314}.
\newblock


\bibitem[\protect\citeauthoryear{Bhatele}{Bhatele}{2014}]%
        {bhatele20145}
\bibfield{author}{\bibinfo{person}{Abhinav Bhatele}.}
  \bibinfo{year}{2014}\natexlab{}.
\newblock \showarticletitle{Chapter 5.2 Projections: Scalable Performance
  Analysis and Visualization}.
\newblock \bibinfo{journal}{\emph{Technical Report: Connecting Performance
  Analysis and Visualization to Advance Extreme Scale Computing, Lawrence
  Livermore National Laboratory}} (\bibinfo{year}{2014}), \bibinfo{pages}{33}.
\newblock


\bibitem[\protect\citeauthoryear{Blelloch, Fineman, Gibbons, Gu, and
  Shun}{Blelloch et~al\mbox{.}}{2015}]%
        {asymmetricSorting}
\bibfield{author}{\bibinfo{person}{Guy~E. Blelloch}, \bibinfo{person}{Jeremy~T.
  Fineman}, \bibinfo{person}{Phillip~B. Gibbons}, \bibinfo{person}{Yan Gu},
  {and} \bibinfo{person}{Julian Shun}.} \bibinfo{year}{2015}\natexlab{}.
\newblock \showarticletitle{Sorting with Asymmetric Read and Write Costs}. In
  \bibinfo{booktitle}{\emph{Proceedings of the 27th ACM Symposium on
  Parallelism in Algorithms and Architectures}} \emph{(\bibinfo{series}{SPAA
  '15})}. \bibinfo{publisher}{ACM}, \bibinfo{address}{New York, NY, USA},
  \bibinfo{pages}{1--12}.
\newblock
\showISBNx{978-1-4503-3588-1}


\bibitem[\protect\citeauthoryear{Blelloch, Gibbons, and Simhadri}{Blelloch
  et~al\mbox{.}}{2010}]%
        {cacheObliviousAlgorithms}
\bibfield{author}{\bibinfo{person}{Guy~E. Blelloch},
  \bibinfo{person}{Phillip~B. Gibbons}, {and} \bibinfo{person}{Harsha~Vardhan
  Simhadri}.} \bibinfo{year}{2010}\natexlab{}.
\newblock \showarticletitle{Low Depth Cache-oblivious Algorithms}. In
  \bibinfo{booktitle}{\emph{Proceedings of the Twenty-second Annual ACM
  Symposium on Parallelism in Algorithms and Architectures}}
  \emph{(\bibinfo{series}{SPAA '10})}. \bibinfo{publisher}{ACM},
  \bibinfo{address}{New York, NY, USA}, \bibinfo{pages}{189--199}.
\newblock
\showISBNx{978-1-4503-0079-7}


\bibitem[\protect\citeauthoryear{Blelloch, Leiserson, Maggs, Plaxton, Smith,
  and Zagha}{Blelloch et~al\mbox{.}}{1991}]%
        {blelloch1991comparison}
\bibfield{author}{\bibinfo{person}{Guy~E Blelloch}, \bibinfo{person}{Charles~E
  Leiserson}, \bibinfo{person}{Bruce~M Maggs}, \bibinfo{person}{C~Greg
  Plaxton}, \bibinfo{person}{Stephen~J Smith}, {and} \bibinfo{person}{Marco
  Zagha}.} \bibinfo{year}{1991}\natexlab{}.
\newblock \showarticletitle{{A comparison of sorting algorithms for the
  connection machine CM-2}}. In \bibinfo{booktitle}{\emph{Proceedings of the
  third annual ACM Symposium on Parallel algorithms and architectures}}. ACM,
  \bibinfo{pages}{3--16}.
\newblock


\bibitem[\protect\citeauthoryear{Blelloch, Leiserson, Maggs, Plaxton, Smith,
  and Zagha}{Blelloch et~al\mbox{.}}{1998}]%
        {blelloch1998experimental}
\bibfield{author}{\bibinfo{person}{Guy~E. Blelloch},
  \bibinfo{person}{Charles~E. Leiserson}, \bibinfo{person}{Bruce~M Maggs},
  \bibinfo{person}{C~Greg Plaxton}, \bibinfo{person}{Stephen~J Smith}, {and}
  \bibinfo{person}{Marco Zagha}.} \bibinfo{year}{1998}\natexlab{}.
\newblock \showarticletitle{An experimental analysis of parallel sorting
  algorithms}.
\newblock \bibinfo{journal}{\emph{Theory of Computing Systems}}
  \bibinfo{volume}{31}, \bibinfo{number}{2} (\bibinfo{year}{1998}),
  \bibinfo{pages}{135--167}.
\newblock


\bibitem[\protect\citeauthoryear{Cheng, Edelman, Gilbert, and Shah}{Cheng
  et~al\mbox{.}}{2006}]%
        {cheng2006novel}
\bibfield{author}{\bibinfo{person}{David~R Cheng}, \bibinfo{person}{Alan
  Edelman}, \bibinfo{person}{John~R Gilbert}, {and} \bibinfo{person}{Viral
  Shah}.} \bibinfo{year}{2006}\natexlab{}.
\newblock \showarticletitle{A novel parallel sorting algorithm for contemporary
  architectures}.
\newblock  (\bibinfo{year}{2006}).
\newblock


\bibitem[\protect\citeauthoryear{Cole}{Cole}{1988a}]%
        {cole1988note}
\bibfield{author}{\bibinfo{person}{Richard Cole}.}
  \bibinfo{year}{1988}\natexlab{a}.
\newblock \showarticletitle{{Note on the AKS sorting network}}.
\newblock \bibinfo{journal}{\emph{Computer Science Department Technical
  Report}}  \bibinfo{volume}{243. New York University, New York,}
  (\bibinfo{year}{1988}).
\newblock


\bibitem[\protect\citeauthoryear{Cole}{Cole}{1988b}]%
        {cole1988parallel}
\bibfield{author}{\bibinfo{person}{Richard Cole}.}
  \bibinfo{year}{1988}\natexlab{b}.
\newblock \showarticletitle{Parallel merge sort}.
\newblock \bibinfo{journal}{\emph{SIAM J. Comput.}} \bibinfo{volume}{17},
  \bibinfo{number}{4} (\bibinfo{year}{1988}), \bibinfo{pages}{770--785}.
\newblock


\bibitem[\protect\citeauthoryear{Frazer and McKellar}{Frazer and
  McKellar}{1970}]%
        {frazer1970samplesort}
\bibfield{author}{\bibinfo{person}{W~Donald Frazer} {and} \bibinfo{person}{AC
  McKellar}.} \bibinfo{year}{1970}\natexlab{}.
\newblock \showarticletitle{Samplesort: A sampling approach to minimal storage
  tree sorting}.
\newblock \bibinfo{journal}{\emph{Journal of the ACM (JACM)}}
  \bibinfo{volume}{17}, \bibinfo{number}{3} (\bibinfo{year}{1970}),
  \bibinfo{pages}{496--507}.
\newblock


\bibitem[\protect\citeauthoryear{Goodrich}{Goodrich}{1999}]%
        {goodrich1999communication}
\bibfield{author}{\bibinfo{person}{Michael~T Goodrich}.}
  \bibinfo{year}{1999}\natexlab{}.
\newblock \showarticletitle{Communication-efficient parallel sorting}.
\newblock \bibinfo{journal}{\emph{SIAM J. Comput.}} \bibinfo{volume}{29},
  \bibinfo{number}{2} (\bibinfo{year}{1999}), \bibinfo{pages}{416--432}.
\newblock


\bibitem[\protect\citeauthoryear{Helman, J{\'a}J{\'a}, and Bader}{Helman
  et~al\mbox{.}}{1998}]%
        {helman1998new}
\bibfield{author}{\bibinfo{person}{David~R Helman}, \bibinfo{person}{Joseph
  J{\'a}J{\'a}}, {and} \bibinfo{person}{David~A Bader}.}
  \bibinfo{year}{1998}\natexlab{}.
\newblock \showarticletitle{A new deterministic parallel sorting algorithm with
  an experimental evaluation}.
\newblock \bibinfo{journal}{\emph{Journal of Experimental Algorithmics (JEA)}}
  \bibinfo{volume}{3} (\bibinfo{year}{1998}), \bibinfo{pages}{4}.
\newblock


\bibitem[\protect\citeauthoryear{Huang and Chow}{Huang and Chow}{1983}]%
        {huang1983parallel}
\bibfield{author}{\bibinfo{person}{JS Huang} {and} \bibinfo{person}{YC Chow}.}
  \bibinfo{year}{1983}\natexlab{}.
\newblock \showarticletitle{Parallel sorting and data partitioning by
  sampling}.
\newblock  (\bibinfo{year}{1983}).
\newblock


\bibitem[\protect\citeauthoryear{Jetley, Gioachin, Mendes, Kale, and
  Quinn}{Jetley et~al\mbox{.}}{2008}]%
        {jetley2008massively}
\bibfield{author}{\bibinfo{person}{Pritish Jetley}, \bibinfo{person}{Filippo
  Gioachin}, \bibinfo{person}{Celso Mendes}, \bibinfo{person}{Laxmikant~V
  Kale}, {and} \bibinfo{person}{Thomas Quinn}.}
  \bibinfo{year}{2008}\natexlab{}.
\newblock \showarticletitle{Massively parallel cosmological simulations with
  ChaNGa}. In \bibinfo{booktitle}{\emph{Proceedings of the International
  Parallel and Distributed Processing Symposium (IPDPS)}}. IEEE,
  \bibinfo{pages}{1--12}.
\newblock


\bibitem[\protect\citeauthoryear{Kale and Krishnan}{Kale and Krishnan}{1993a}]%
        {kale1993Charm++}
\bibfield{author}{\bibinfo{person}{Laxmikant~V Kale} {and}
  \bibinfo{person}{Sanjeev Krishnan}.} \bibinfo{year}{1993}\natexlab{a}.
\newblock \bibinfo{booktitle}{\emph{CHARM++: a portable concurrent object
  oriented system based on C++}}. Vol.~\bibinfo{volume}{28}.
\newblock \bibinfo{publisher}{ACM}.
\newblock


\bibitem[\protect\citeauthoryear{Kale and Krishnan}{Kale and Krishnan}{1993b}]%
        {kale1993comparison}
\bibfield{author}{\bibinfo{person}{Laxmikant~V Kale} {and}
  \bibinfo{person}{Sanjeev Krishnan}.} \bibinfo{year}{1993}\natexlab{b}.
\newblock \showarticletitle{A comparison based parallel sorting algorithm}. In
  \bibinfo{booktitle}{\emph{Proceedings of the International Conference on
  Parallel Processing (ICPP)}}, Vol.~\bibinfo{volume}{3}. IEEE,
  \bibinfo{pages}{196--200}.
\newblock


\bibitem[\protect\citeauthoryear{Li and Sevcik}{Li and Sevcik}{1994}]%
        {li1994parallel}
\bibfield{author}{\bibinfo{person}{Hui Li} {and} \bibinfo{person}{Kenneth~C
  Sevcik}.} \bibinfo{year}{1994}\natexlab{}.
\newblock \showarticletitle{Parallel sorting by over partitioning}. In
  \bibinfo{booktitle}{\emph{Proceedings of the sixth annual Symposium on
  Parallel algorithms and architectures (SPAA)}}. ACM, \bibinfo{pages}{46--56}.
\newblock


\bibitem[\protect\citeauthoryear{Li, Lu, Schaeffer, Shillington, Wong, and
  Shi}{Li et~al\mbox{.}}{1993}]%
        {li1993versatility}
\bibfield{author}{\bibinfo{person}{Xiaobo Li}, \bibinfo{person}{Paul Lu},
  \bibinfo{person}{Jonathan Schaeffer}, \bibinfo{person}{John Shillington},
  \bibinfo{person}{Pok~Sze Wong}, {and} \bibinfo{person}{Hanmao Shi}.}
  \bibinfo{year}{1993}\natexlab{}.
\newblock \showarticletitle{On the Versatility of Parallel Sorting by Regular
  Sampling}.
\newblock \bibinfo{journal}{\emph{Parallel Comput.}} \bibinfo{volume}{19},
  \bibinfo{number}{10} (\bibinfo{date}{Oct.} \bibinfo{year}{1993}),
  \bibinfo{pages}{1079--1103}.
\newblock
\showISSN{0167-8191}


\bibitem[\protect\citeauthoryear{O'Malley}{O'Malley}{2008}]%
        {o2008terabyte}
\bibfield{author}{\bibinfo{person}{Owen O'Malley}.}
  \bibinfo{year}{2008}\natexlab{}.
\newblock \showarticletitle{Terabyte sort on apache hadoop}.
\newblock \bibinfo{journal}{\emph{Yahoo,
  http://sortbenchmark.org/YahooHadoop.pdf}} (\bibinfo{year}{2008}),
  \bibinfo{pages}{1--3}.
\newblock


\bibitem[\protect\citeauthoryear{Paterson}{Paterson}{1990}]%
        {paterson1990improved}
\bibfield{author}{\bibinfo{person}{M.~S. Paterson}.}
  \bibinfo{year}{1990}\natexlab{}.
\newblock \showarticletitle{Improved sorting networks with {O(logN)} depth}.
\newblock \bibinfo{journal}{\emph{Algorithmica}} \bibinfo{volume}{5},
  \bibinfo{number}{1} (\bibinfo{year}{1990}), \bibinfo{pages}{75--92}.
\newblock
\showISSN{1432-0541}


\bibitem[\protect\citeauthoryear{Pje{\v{s}}ivac-Grbovi{\'{c}}, Angskun,
  Bosilca, Fagg, Gabriel, and Dongarra}{Pje{\v{s}}ivac-Grbovi{\'{c}}
  et~al\mbox{.}}{2007}]%
        {pjevsivac2007performance}
\bibfield{author}{\bibinfo{person}{Jelena Pje{\v{s}}ivac-Grbovi{\'{c}}},
  \bibinfo{person}{Thara Angskun}, \bibinfo{person}{George Bosilca},
  \bibinfo{person}{Graham~E. Fagg}, \bibinfo{person}{Edgar Gabriel}, {and}
  \bibinfo{person}{Jack~J. Dongarra}.} \bibinfo{year}{2007}\natexlab{}.
\newblock \showarticletitle{{Performance analysis of MPI collective
  operations}}.
\newblock \bibinfo{journal}{\emph{Cluster Computing}} \bibinfo{volume}{10},
  \bibinfo{number}{2} (\bibinfo{year}{2007}), \bibinfo{pages}{127--143}.
\newblock
\showISSN{1573-7543}


\bibitem[\protect\citeauthoryear{Shi and Schaeffer}{Shi and Schaeffer}{1992}]%
        {shi1992parallel}
\bibfield{author}{\bibinfo{person}{Hanmao Shi} {and} \bibinfo{person}{Jonathan
  Schaeffer}.} \bibinfo{year}{1992}\natexlab{}.
\newblock \showarticletitle{Parallel Sorting by Regular Sampling}.
\newblock \bibinfo{journal}{\emph{J. Parallel and Distrib. Comput.}}
  \bibinfo{volume}{14}, \bibinfo{number}{4} (\bibinfo{date}{April}
  \bibinfo{year}{1992}), \bibinfo{pages}{361--372}.
\newblock
\showISSN{0743-7315}


\bibitem[\protect\citeauthoryear{Solomonik and Kale}{Solomonik and
  Kale}{2010}]%
        {solomonik2010highly}
\bibfield{author}{\bibinfo{person}{E. Solomonik} {and} \bibinfo{person}{L.~V.
  Kale}.} \bibinfo{year}{2010}\natexlab{}.
\newblock \showarticletitle{Highly scalable parallel sorting}. In
  \bibinfo{booktitle}{\emph{Proceedings of the International Parallel and
  Distributed Processing Symposium (IPDPS)}}. \bibinfo{pages}{1--12}.
\newblock
\showISSN{1530-2075}


\bibitem[\protect\citeauthoryear{Sundar, Malhotra, and Biros}{Sundar
  et~al\mbox{.}}{2013}]%
        {sundar2013hyksort}
\bibfield{author}{\bibinfo{person}{Hari Sundar}, \bibinfo{person}{Dhairya
  Malhotra}, {and} \bibinfo{person}{George Biros}.}
  \bibinfo{year}{2013}\natexlab{}.
\newblock \showarticletitle{HykSort: A New Variant of Hypercube Quicksort on
  Distributed Memory Architectures}. In \bibinfo{booktitle}{\emph{Proceedings
  of the 27th International ACM Conference on International Conference on
  Supercomputing}} \emph{(\bibinfo{series}{ICS '13})}.
  \bibinfo{publisher}{ACM}, \bibinfo{address}{New York, NY, USA},
  \bibinfo{pages}{293--302}.
\newblock
\showISBNx{978-1-4503-2130-3}


\bibitem[\protect\citeauthoryear{Thakur and Gropp}{Thakur and Gropp}{2003}]%
        {thakur2003improving}
\bibfield{author}{\bibinfo{person}{Rajeev Thakur} {and}
  \bibinfo{person}{William~D Gropp}.} \bibinfo{year}{2003}\natexlab{}.
\newblock \showarticletitle{{Improving the performance of collective operations
  in MPICH}}. In \bibinfo{booktitle}{\emph{European Parallel Virtual
  Machine/Message Passing Interface Users Group Meeting}}. Springer,
  \bibinfo{pages}{257--267}.
\newblock


\bibitem[\protect\citeauthoryear{Valiant}{Valiant}{1990}]%
        {valiant1990bridging}
\bibfield{author}{\bibinfo{person}{Leslie~G. Valiant}.}
  \bibinfo{year}{1990}\natexlab{}.
\newblock \showarticletitle{A Bridging Model for Parallel Computation}.
\newblock \bibinfo{journal}{\emph{Commun. ACM}} \bibinfo{volume}{33},
  \bibinfo{number}{8} (\bibinfo{date}{Aug.} \bibinfo{year}{1990}),
  \bibinfo{pages}{103--111}.
\newblock
\showISSN{0001-0782}


\end{thebibliography}

\appendix


\appendix



\label{apx:A}
\section{Scanning Algorithm analysis}

We formalize the proof sketch given by Axtmann et al.~\cite{axtmann2015practical} for their parallel sorting algorithm.
Their algorithm scans the results of a histogram collected from a randomly selected sample, picking splitters consecutively so that each processor gets no more than $(1+\epsilon)N/p$ elements.
If the sample is large enough, with high probability the average load assigned to each consecutive processor will be greater than $N/p$ and the last processor will not end up with too many keys.
We provide a formal bound on the probability that this holds for the appropriate sample size.
\begin{lemma} \label{one round scanning algorithm}
If every key is independently picked in the sample with probability, ${ps}/{N} = {p} \max({2}/{\epsilon}, 2\log N)/N$, then the scanning algorithm assigns no more than ${N(1+\epsilon)}/{p}$ keys to each processor with high probability.
\end{lemma}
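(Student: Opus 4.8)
First I would reduce the statement to a single concentration bound. Since the scanning algorithm assigns each of the first $p-1$ processors a maximal run of buckets subject to the cap $N(1+\epsilon)/p$, only the last processor, which receives all remaining keys, can possibly exceed $N(1+\epsilon)/p$. For $j=1,\dots,p-1$ let $c_j$ be the rank at which processor $j$'s range ends (a sampled rank, or $0$, with $c_0=0$), let $L_j=c_j-c_{j-1}$ be its load, let $\tau_j=c_{j-1}+N(1+\epsilon)/p$ be the threshold against which processor $j$ is compared, and let $s_j=N(1+\epsilon)/p-L_j=\tau_j-c_j\ge 0$ be its unused capacity (``slack''). Telescoping gives $c_{p-1}=\sum_{j=1}^{p-1}L_j=(p-1)N(1+\epsilon)/p-\sum_{j=1}^{p-1}s_j$, so the last processor gets $N-c_{p-1}=N/p-(p-1)N\epsilon/p+\sum_{j=1}^{p-1}s_j$ keys, which is at most $N(1+\epsilon)/p$ precisely when $\sum_{j=1}^{p-1}s_j\le N\epsilon$. (If some processor can take no bucket at all, i.e.\ $c_j=c_{j-1}$, then all later processors and the last one get nothing, so that case is harmless; I will see below that it occurs only on a rare event.) Thus it suffices to show $\sum_{j=1}^{p-1}s_j\le N\epsilon$ with high probability.

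Next I would set up a ``fresh interval'' view that removes the dependence between the $s_j$. Let $\mathcal G$ be the event that no maximal block of consecutive unsampled ranks has length exceeding $N/p$; then $P(\mathcal G^{c})\le N(1-ps/N)^{N/p}\le Ne^{-s}\le Ne^{-2\log N}=1/N$ since $s\ge 2\log N$. Now condition on everything revealed through the assignment of processor $j-1$. This information depends only on the sampled ranks $\le\tau_{j-1}$ (in particular it only reveals that $(c_{j-1},\tau_{j-1}]$ contains no sampled rank), so the sampled ranks in the \emph{fresh} interval $(\tau_{j-1},\tau_j]$ are still a fresh Bernoulli$(ps/N)$ draw independent of the past. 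On $\mathcal G$ this fresh interval must contain a sampled rank — otherwise $(c_{j-1},\tau_j]$, of length $N(1+\epsilon)/p>N/p$, would be sample-free — and $c_j$ is its largest element, so on $\mathcal G$ we have $s_j=\tau_j-c_j\le\beta_j$, where $\beta_j$ is the distance from $\tau_j$ back to the nearest sampled rank inside $(\tau_{j-1},\tau_j]$ (and $\beta_j:=0$ when there is none). Scanning backward from $\tau_j$, $\beta_j$ is stochastically dominated by a geometric random variable with parameter $ps/N$, so by the same geometric-series estimate used in the proof of Lemma~\ref{lemma_k_ROUNDS_sample_length}, $E[\beta_j\mid\text{past}]\le N/(ps)$; moreover $\beta_j\le\tau_j-\tau_{j-1}=L_{j-1}\le N(1+\epsilon)/p\le 2N/p$, and on $\mathcal G$ we also have $s_j<N/p$ outright.

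The final step is a martingale tail bound. On $\mathcal G$, $\sum_{j=1}^{p-1}s_j\le\sum_{j=1}^{p-1}\beta_j$, and $M_j:=\sum_{l=1}^{j}\bigl(\beta_l-N/(ps)\bigr)$ is a supermartingale for the filtration generated by the successive assignments, with increments of size $O(N/p)$ and predictable quadratic variation $O\bigl(p\,(N/(ps))^{2}\bigr)$. Since $(p-1)\,N/(ps)\le N\epsilon/2$ in both branches of the $\max$ defining $s$, a Bernstein/Freedman inequality gives $\sum_{l=1}^{p-1}\beta_l\le N\epsilon/2+N\epsilon/2=N\epsilon$ except with probability $e^{-\Omega(p\epsilon)}$ (evaluating the bound separately in the regimes $s=2/\epsilon$ and $s=2\log N$), which is $o(1)$ for large $p$. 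Combining this with $P(\mathcal G^{c})\le 1/N$ yields the lemma.

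The main obstacle is exactly the step that the paper flags as the ``conditional dependence of loads assigned to consecutive processors'': making rigorous that $s_j$, conditioned on the entire history, is dominated by a backward geometric gap into a genuinely independent interval. The key that unlocks it is the observation that the history before processor $j$ involves only ranks $\le\tau_{j-1}$, so the window $(\tau_{j-1},\tau_j]$ is sampled afresh and the supermartingale structure of the slacks becomes visible. Some care is also needed to verify the constant $(p-1)N/(ps)\le N\epsilon/2$ — and hence the concentration — in both cases of $s=\max(2/\epsilon,2\log N)$, and to treat the degenerate regime $ps/N\ge 1$ (every key sampled, so all buckets have size one) separately, where the claim is immediate.
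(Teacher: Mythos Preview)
Your proof is correct and follows essentially the same blueprint as the paper: reduce to the last processor by telescoping the slacks $s_j=N(1+\epsilon)/p-L_j$; invoke the event that every length-$N/p$ window contains a sample (the paper's ``$h=p-1$'', your $\mathcal G$); and bound each slack by a backward geometric gap into an interval whose sampling is fresh conditional on the history. The one genuine difference is in how the dependent sum of slacks is controlled. The paper constructs an explicit independent coupling --- random variables $\tilde r_i$ on disjoint copies of a length-$(1{+}\epsilon)N/p$ Bernoulli process --- that dominate the $r_i$'s, and then applies Hoeffding to the independent $\tilde r_i$'s; you instead keep the natural filtration, observe that the centered slacks form a supermartingale with bounded increments, and apply Freedman/Bernstein. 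Your route avoids the coupling construction and actually yields a somewhat sharper tail ($e^{-\Omega(p\epsilon)}$ in the $s=2/\epsilon$ regime versus the paper's $e^{-p\epsilon^2/(2(1+\epsilon)^2)}$), at the cost of invoking a martingale inequality rather than vanilla Hoeffding; beyond this, the two arguments coincide.
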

\begin{proof}
Let $\srank{0},\ldots,\srank{{}p}$ be the indices in the global order of the splitter keys selected by the scanning algorithm (with $\srank{0}=0$ and $\srank{p}=N$) and let $n_i=\srank{{}i+1}-\srank{{}i}$ be the resulting number of keys asisgned to processor $i$.
By construction the algorithm assigns at most $n_i\leq (1+\epsilon)N/p$ for $i< p-1$, so it suffices to show $n_{p}\leq (1+\epsilon)N/p$.

Select all unique splitter indices $\{\tilde{\chi}_0,\ldots,\tilde{\chi}_h\}$ from $\{\srank{0},\ldots,\srank{{}p-1}\}$, so $h < p$. Note that we exclude $\srank{{}p}$ and hence $\tilde{\chi}_{i+1} - \tilde{\chi}_{i} < N(1+\epsilon)/p $ for all $i$.
Splitters can repeat if there are more than $(1+\epsilon)N/p$ keys between any pair of splitters (including splitters with index $0$ and $N$), in which case the algorithm essentially fails.
The only other way the scanning algorithm yields repeated splitters is if it runs out of sample keys to assign to pick as splitters, in which case $n_{p-1}=0$.
Without loss of generality, we assume the latter does not happen and show that the former is improbable by showing that w.h.p.\ a sample is selected from each subrange of size $N/p$,
\begin{align*}
P(h=p-1) & \geq 1 - \Big( N \times \big(1-\frac{ps}{N}\big)^{\frac{N}{p}} \Big) \\
& \geq 1 - \Big( N \times e^{-\frac{Nps}{pN}} \Big)  =  1 - \Big( N \times e^{-s} \Big) \\
& \geq 1  - \Big( N \times e^{-2\log N} \Big) \geq 1 - \frac{1}{N}
\end{align*}
We proceed to show that if $h=p-1$, then $n_{p-1}$ is small with high probability.

Let $t_i$ be the next key in the sample after $\tilde{\chi}_i$ for $i\leq p$.
We have that $\tilde{\chi}_{i+1}\geq t_i$ for $i<p$.
Given the location of splitter $\tilde{\chi}_{i-1}$, we have that $\tilde{\chi}_i\leq o_i=\tilde{\chi}_{i-1}+(1+\epsilon)N/p$.
Since $t_i$ and $o_i$ depend only on whether keys in the range $[0,t_i]$ are present in the sample, the probability of any key in the range $(t_i,o_i]$ being in the sample, is independent of $t_i$ and $o_i$ given the size of the interval ($o_i - t_i$).
Define $r_i=o_i-\tilde{\chi}_i=(1+\epsilon)N/p - (\tilde{\chi}_i-\tilde{\chi}_{i-1})$.
Each $r_i\leq o_i-t_i$ is a random variable that depends exclusively on the existence and location of a sample key in $(t_i,o_i]$ and the size of the interval ($o_i-t_i$).
Since the probability of observing keys in this interval is independent of $t_i$ given the size of the interval ($o_i - t_i$), $r_i$ can depend on $\{r_1,\ldots,r_{i-1}\}$ only as a function of the size of the interval. 

We can express each $r_i$ as $\text{min}(o_i - t_i, \tilde{r}_i)$ where $\tilde{r}_i$ is a random variable corresponding to the distance between the closest key in $[0,(1+\epsilon)N/p]$ to $(1+\epsilon)N/p$ when keys in this range are picked with uniform probability $ps/N$.
Then we observe that $r_i\leq \tilde{r}_i$ and the random variables $\tilde{r}_i$ are mutually independent.
%
%
Since $\tilde{r}_i$s are picked independently using binomial trials, they are exponentially distributed. More specifically, $P\big[\tilde{r}_{i} \geq k\big] = (1-ps/N)^{k}$. Thus, we have,
\begin{align*}
 & E\big[r_i\big] \leq  E\big[	\tilde{r}_i\big] \leq \sum_{k=1}^{\infty}\big(1 - \frac{ps}{N}\big)^{k} \leq \frac{N}{ps} 
\end{align*}

Coming back to bounding the number of keys assigned to the last processor by the scanning algorithm $n_{p-1}$, we have,
\begin{align*}
n_{p-1} &= N - \sum_{i=0}^{p-2}n_{i} = N - \sum_{i=0}^{p-2} \big(\frac{N}{p} + \frac{N\epsilon}{p} - r_{i}\big) \\
&= \frac{N}{p} - \frac{N\epsilon(p-1)}{p} + \sum_{i=0}^{p-2} r_{i}
\end{align*}


Using sampling ratio $s = \frac{2}{\epsilon}$ we have,
\begin{align*}
P\Big[\sum_{i=0}^{h-1}r_i \geq N\epsilon\Big] 
&\leq\ P\Big[\sum_{i=0}^{h-1}\tilde{r}_i \geq N\epsilon\Big] \\
&\leq P\Big[\sum_{i=0}^{h-1}\tilde{r}_i \geq \frac{N\epsilon}{2} + \frac{N\epsilon(h-1)}{2p}\Big] \\
&\leq  P\Bigg[\Big(\sum_{i=0}^{h-1}\tilde{r}_i - E[\sum_{i=0}^{h-1}\tilde{r}_i]\Big)\geq \frac{N\epsilon}{2} \Bigg] \\
&\leq e^{\frac{-p\epsilon^{2}}{2(1+\epsilon)^{2}}}
\end{align*}
The last inequality is obtained using an application of Hoeffding's inequality since all $\tilde{r}_i$'s are independent.

With high probability, we have $h=p-1$. Hence using expression for $n_{p-1}$
\begin{align*}
P\Big[n_{p-1} \geq  \frac{N}{p} + \frac{N\epsilon}{p} \Big] 
 &\leq\ P\Big[\sum_{i=0}^{p-2}r_i \geq N\epsilon\Big] \\
&\leq e^{\frac{-p\epsilon^{2}}{2(1+\epsilon)^{2}}} \ \text{w.h.p.}
\end{align*}

\end{proof}



\section{HykSort Sampling Algorithm: Analysis} \label{hyksort analysis}

HykSort~\cite{sundar2013hyksort} selects $\mathcal{O}(\beta)$ samples from every splitter interval in every round, thus resulting in an overall sample size $\mathcal{O}(\beta p)$. In contrast, HSS picks samples uniformly from the union of all splitter intervals, also resulting in an overall sample size $\mathcal{O}(\beta p)$. Effectively, sampling in HSS from a splitter interval is proportional to the size of the interval. We prove that HykSort requires at least $\Omega \Big(\frac{\log (p)}{\log\log (p)}\Big)$ rounds, so that all splitters are within a distance of $N\epsilon/p$ from the ideal splitters. 

Our proof strategy is the following. First of all we reduce the problem by using $\beta=1$. Sampling $\beta$ samples per round can bring down the number of rounds by at most a factor of $\beta$. Since we're only interested in the dependence of $p$ on the number of rounds, it suffices to show that HykSort requires at least $\Omega \Big(\frac{\log (p)}{\log\log (p)}\Big)$ rounds with $\beta = 1$.

Secondly, we assume a better starting point for the splitter intervals. More specifically, we assume that the initial $i^{th}$ splitter interval is given to be $[Ni/p - N/2p, Ni/p + N/2p]$, instead of the entire range $[0, N]$. Starting with a narrowed splitter interval will only decrease the number of rounds. This eases the analysis since effectively each splitter interval is being independently sampled and the number of rounds should be enough so that for all $i$, at least one key is sampled that is within the target range $\mathcal{T}_i = [Ni/p - N\epsilon/p, Ni/p + N\epsilon/p]$.

From here on, we can work with just one interval and determine the number of rounds required so that at least one key is chosen in the target range $\mathcal{T}_i = [Ni/p - N\epsilon/p, Ni/p + N\epsilon/p]$ with probability $\geq 1 - 1/p$. Note that probability $\geq 1 - 1/p$ is required to use the union bound to bound the probability of not finding a sample in the target range for any of the splitter intervals (there are $p-1$ splitters to be determined). 

HykSort's sampling algorithm is as follows. In round $r$, it samples one key $k$ (recall that we assumed $\beta = 1$) in the splitter interval $[L_r(i), U_r(i)]$ and then updating the splitter interval as \[U_{r+1}(i) = min(U_r(i), k)\] \[L_{r+1}(i) = max(L_r(i), k)\]

As discussed earlier, the initial interval is \[[L_0(i), U_0(i)] = \Big[\frac{Ni}{p} - \frac{N}{2p}, \frac{Ni}{p} + \frac{N}{2p}\Big]\]

In the following section we prove that it takes $r = \Omega \Big(\frac{\log (p)}{\log\log (p)}\Big)$ rounds such that $P[U_{r}(i) \leq Ni/p + N\epsilon/p]$ with probability $\geq 1 - 1/p$. 

By basically the same argument it can be shown that it takes, $r = \Omega \Big(\frac{\log (p)}{\log\log (p)}\Big)$ rounds such that  $P[L_{r}(i) \geq Ni/p - N\epsilon/p]$ with probability $\geq 1 - 1/p$.

\subsection{The line algorithm}

\begin{figure}
\vskip 1.2 cm
\includegraphics[width=0.35\textwidth, height=37pt]{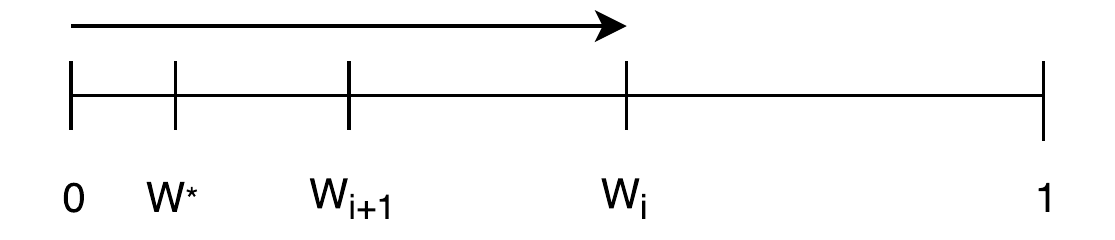}
\compactcaption{Line Algorithm : $w_{i+1}$ is picked uniformly at random from $[0, w_i)$}.
\label{fig:samplesize}
\end{figure}

Consider the following algorithm.

Line Algorithm: Pick a point $w_0$ uniformly at random in the real interval $[0, 1)$. In the next round pick a point $w_1$ uniformly at random in the real interval $[0, w_0)$. Similarily, in the $\text{i}^{\text{th}}$ round pick a random point $w_i$ in the interval $[0, w_{i-1})$ and so on. The line algorithm captures the sampling algorithm of HykSort.

Given a point $w^{*} \in [0,1)$ and probability bound $p^{*}$, we wish to bound the number of rounds $r$ so that $P[w_r > w^{*}] < p^{*}$. For the analysis of HykSort, we'll set $w^{*} = \frac{N\epsilon/p}{N/p} = \epsilon$ and $p^{*} = 1/p$. We prove the following lemma. 

\begin{lemma}
For a given $w^{*} \in [0,1)$, \ the number of rounds $r$ after which $P[w_r > w^{*}] < p^{*}$ is $\Omega \Big(\frac{\log (1/p^{*})}{\log\log (1/p^{*})}\Big)$.
\end{lemma}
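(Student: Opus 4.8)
The plan is to unroll the recursion and exploit the fact that the line algorithm produces a \emph{product} of independent uniforms. Writing $w_0 = U_0$ and, conditioned on the past, $w_i = w_{i-1}U_i$ with $U_i$ uniform on $[0,1)$ and independent of everything before it, we get $w_r = \prod_{i=0}^{r} U_i$ with $U_0,\dots,U_r$ i.i.d.\ uniform on $[0,1)$. Equivalently, $-\ln w_r$ is a sum of $r+1$ i.i.d.\ $\mathrm{Exp}(1)$ random variables, so $P[w_r > w^*]$ is exactly the probability that this sum stays below $c := \ln(1/w^*)$. The whole lemma is then an anti-concentration statement: a sum of few exponentials is not unlikely to be small.

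\textbf{The lower bound and the choice of $r$.} The core step is to bound $P[w_r>w^*]$ from below. If every factor satisfies $U_i > (w^*)^{1/(r+1)}$ then $w_r > w^*$, so by independence
\[
P[w_r > w^*] \;\ge\; \bigl(1 - (w^*)^{1/(r+1)}\bigr)^{r+1} \;=\; \bigl(1 - e^{-c/(r+1)}\bigr)^{r+1}.
\]
Once $r$ is large enough that $c/(r+1)\le 1$, we use $1-e^{-x}\ge x/2$ to get $P[w_r > w^*] \ge \bigl(c/(2(r+1))\bigr)^{r+1}$. Taking logarithms, this is at least $p^*$ precisely when $(r+1)\ln\bigl(2(r+1)/c\bigr) \le \ln(1/p^*)$. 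Plugging in $r+1 = \bigl\lfloor \tfrac12\ln(1/p^*)/\ln\ln(1/p^*)\bigr\rfloor$ makes $\ln(2(r+1)/c) \le \ln\ln(1/p^*) + \ln(2/c)$, so the left-hand side is at most $\tfrac12\ln(1/p^*) + O\bigl(\ln(1/p^*)/\ln\ln(1/p^*)\bigr) \le \ln(1/p^*)$ for all sufficiently large $1/p^*$ (treating $c$, equivalently $w^*$, as a fixed positive constant, exactly as in the HykSort application where $w^*=\epsilon$). Hence for such $r$ we still have $P[w_r > w^*]\ge p^*$. Contrapositively, any $r$ with $P[w_r>w^*]<p^*$ must satisfy $r+1 > \tfrac12\ln(1/p^*)/\ln\ln(1/p^*)$, i.e.\ $r=\Omega\bigl(\log(1/p^*)/\log\log(1/p^*)\bigr)$; substituting $w^*=\epsilon$, $p^*=1/p$ gives the $\Omega(\log p/\log\log p)$ round bound claimed for HykSort.

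\textbf{Main obstacle and an alternative.} The argument is conceptually short; the only real work is keeping the chain of inequalities honest about lower-order terms and ranges of validity: one must (i) verify $r$ is in the regime where $1-e^{-x}\ge x/2$ applies to $x=c/(r+1)$ (automatic asymptotically), (ii) absorb the term $(r+1)\ln(2/c)$, which is non-positive when $c$ is large and $o(\ln(1/p^*))$ when $c$ is a small constant, and (iii) be explicit that the $\Omega$ is as $1/p^*\to\infty$ with $w^*$ held fixed. A slightly sharper and equally clean variant of the core step replaces the "all factors large'' bound by the Gamma-density estimate $P[-\ln w_r < c] = P[\Gamma(r{+}1,1)<c] \ge e^{-c}c^{r+1}/(r+1)!$, obtained from $\int_0^c x^{r}e^{-x}\,dx \ge e^{-c}c^{r+1}/(r+1)$, and then applies Stirling to $(r+1)!$; this yields the same threshold for $r$ and may read more transparently.
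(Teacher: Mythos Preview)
Your proof is correct and takes a genuinely different route from the paper's. The paper computes the exact density $f^r(x)=\log^r(1/x)/r!$ by induction, integrates to get $P[w_r>w^*]=e^{-t}\bigl(e^t-\sum_{k=0}^{r}t^k/k!\bigr)$ with $t=\ln(1/w^*)$, then invokes the Lagrange remainder for $e^t$ to write this as $e^{\xi-t}t^{r+1}/(r+1)!$ and finishes with Stirling. Your main argument bypasses the exact distribution entirely: you exploit the product structure $w_r=\prod_{i=0}^{r}U_i$ and use the crude but fully sufficient lower bound $P[w_r>w^*]\ge\bigl(1-(w^*)^{1/(r+1)}\bigr)^{r+1}$ from the event ``every factor is large''. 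This is more elementary (no density calculation, no Taylor remainder) and makes the independence do all the work; the price is a slightly looser constant that is irrelevant for an $\Omega(\cdot)$ statement. Amusingly, the alternative you sketch at the end---$P[w_r>w^*]=P[\Gamma(r{+}1,1)<c]\ge e^{-c}c^{r+1}/(r+1)!$ followed by Stirling---is exactly the paper's argument in probabilistic clothing, since $-\ln w_r\sim\Gamma(r{+}1,1)$ and the Poisson--Gamma identity recovers the paper's CDF formula verbatim.
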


\begin{proof}
By definition of the line algorithm, \[0 \leq w_{i+1} \leq w_{i}\]

Let $f^{i}(x)$ be the probability density function of $w_{i}$. We have the following recurrence for $f^{i}(x)$,

\begin{align*}
f^{i}(x) = \int_x^1 \frac{f^{i-1}(y)}{y} dy \quad \forall i \geq 1
\end{align*}

The expression inside the integral represents the probability $P\big[w_{i} \in [x, x+dx]\ \big|\ w_{i-1} \in [y, y+dy]\big] P\big[w_{i-1} \in [y, y+dy]\big]$. We have $f^{0}(x) = 1$. Using induction on $i$, it can be easily seen that \[f^{i}(x) = \frac{\log^{i} (\frac{1}{x})}{i!}\]

We can also obtain the corresponding cumulative density functions $F^{i}(x)$,

\begin{align*}
F^{i}(x) &= P[w_{i} \leq x] \\
&= \int_0^x f^{i}(y) dy \\
&= x \sum_{k=0}^i \frac{\log^{k} \big(\frac{1}{x}\big)}{k!}
\end{align*}

It can be verified that $\lim_{i \to \infty} F^{i}(x) =  x e^{\log (\frac{1}{x})} = 1$, using Taylor's expansion for the exponential function. We have

\begin{align*}
P[w_r > w^{*}] &= 1 - F^r(w^{*}) \\
&= 1 - w^{*} \sum_{k=0}^r \frac{\log^{k} \big(\frac{1}{w^{*}}\big)}{k!} \\
&= \Big(1 - e^{t} \sum_{k=0}^r \frac{t^{k}}{k!}\Big), \text{ where $t = \log \frac{1}{w^{*}}$}\\
&= e^{-t} \Big(e^{t} - \sum_{k=0}^r \frac{t^{k}}{k!}\Big)\\
&= e^{-t} \Big(\frac{e^{\xi}t^{r+1}}{(r+1)!}\Big) \text{ \ for some $\xi \in [0,t]$}
\end{align*}

The last deduction is based on the error term in the Taylor expansion. We want the error term to be smaller than $p^{*}$. Hence we have,

\begin{align*}
& e^{-t} \Big(\frac{e^{\xi}t^{r+1}}{(r+1)!}\Big) \leq p^{*} \\
\Rightarrow \quad & \frac{(r+1)!}{t^{r+1}} \geq \frac{1}{p^{*}e^{t-\xi}} \\
\Rightarrow \quad & \log (r+1)! - (r+1)\log t \geq \log \frac{1}{p^{*}} - (t-\xi) \\
\Rightarrow \quad & (r+1)\log (r+1) - (r+1) + \mathcal{O}(r) \geq \log \frac{1}{p^{*}} - (t-\xi)
\end{align*}

The last deduction is using Stirling's formula: \[\log n! = n\log n - n + \mathcal{O}(\log n)\] Note that the dominating term in LHS is $(r+1)\log (r+1)$ as $t$ is a constant. Thus we obtain \[r = \Omega \Big(\frac{\log (1/p^{*})}{\log\log (1/p^{*})}\Big)\]

\end{proof}


\end{document}